\newtheorem{theorem}{Theorem}[section]
\newtheorem{thm}{Theorem}[section]
\newtheorem{proposition}[thm]{Proposition}
\newtheorem{lemma}[thm]{Lemma}
\newtheorem{definition}[thm]{Definition}
\theoremstyle{remark}
\begin{document}

\title{3D compatible ternary systems and Yang-Baxter maps}

\author{Theodoros E. Kouloukas, Vassilios G. Papageorgiou}

\affil{\small{Department of Mathematics, University of Patras,
GR-265 00
Patras, Greece \\
tkoulou@master.math.upatras.gr, \ vassilis@math.upatras.gr}}

\date{}

 \maketitle

\begin{abstract}
According to Shibukawa, ternary systems defined on quasigroups and
satisfying certain conditions provide a way of constructing
dynamical Yang-Baxter maps. After noticing that these conditions can
be interpreted as 3-dimensional compatibility of equations on
quad-graphs, we investigate when the associated dynamical
Yang-Baxter maps are in fact parametric Yang-Baxter maps. In some
cases these maps can be obtained as reductions of higher dimensional
maps through compatible constraints. Conversely, parametric YB maps
on quasigroups with an invariance condition give rise to
3-dimensional compatible systems. The application of this method on
spaces with certain quasigroup structures provides new examples of
multi-parametric YB maps and 3-dimensional compatible systems.

\end{abstract}

%Uncomment for PACS numbers title message
%\pacs{02.30.lk, 03.65.Fd} \ams{17B80, 81R50, 37K35}
% Keywords required only for MST, PB, PMB, PM, JOA, JOB?
%\vspace{2pc}\\
%\noindent{\it Keywords}: Article preparation, IOP journals
% Uncomment for Submitted to journal title message
%\submitto{\JPA}
% Comment out if separate title page not required
\maketitle

\section{Introduction}

In \cite{shib,shib2}, Shibukawa studied the (set--theoretic)
dynamical Yang-Baxter (YB) equation on quasigroups and on loops. A
construction of dynamical YB maps was introduced in \cite{shib2},
from ternary systems defined on left quasigroups that satisfy
certain compatibility conditions. A parametric version of the latter
conditions turns to be equivalent with the 3-dimensional consistency
property of discrete (parametric) equations on lattices. Extra
`symmetry' conditions on the ternary system provide a way of
constructing parametric YB maps on groups and on particular
quasigroups. From the other hand, parametric YB maps satisfying an
invariance condition on quasigroups give rise to 3-dimensional
compatible systems. We apply these constructions on spaces with
certain quasigroup structures in order to derive (multi-)parametric
YB maps from 3-dimensional consistent equations on quad-graphs and
vice versa.

We begin in Section \ref{secshib} by giving the necessary
definitions for dynamical YB maps, parametric YB maps and Lax
matrices and by presenting the construction of dynamical YB maps due
to Shibukawa.

In Section \ref{sec3d}, we show the equivalence between the
3-dimensonal consistency property and the conditions of the previous
construction. By considering symmetry conditions on 3-dimensional
consistent equations on quasigroups we derive parametric YB maps. We
apply this method on 3-dimensional consistent equations on
quad-graphs. In each case, the corresponding quasigroup structure
that we consider is chosen according to the fulfilled symmetry
condition and gives rise to a different YB map.

In section \ref{secappl} we study the inverse problem in order to
derive 3-dimensional compatible systems from YB maps. We apply this
construction on the quasigroup $\mathbb{C}^{\ast}$ equipped with
division to a new four parametric quadrirational YB map on
$\mathbb{C}^{\ast}\times \mathbb{C}^{\ast}$. The latter map was
obtained from a higher dimensional map by imposing a compatible
constraint. The corresponding 3-dimensional compatible system
provides another four parametric map on the loop $\mathbb{C}$
equipped with substraction. We close this section by presenting a
3-dimensional compatible system on $GL_2(\mathbb{C})$ (that can be
generalized on $GL_n(\mathbb{C})$), derived by the general solution
of the re-factorization problem presented in \cite{kp,kp1}.

We conclude in Section \ref{secconcl} by giving some comments and
perspectives for future work.

\section{Quasigroups, Ternary systems and YB maps} \label{secshib}

 Let $H$, $X$ be two non-empty sets and $\phi$ a map from $H \times
X$ to $H$. A map
$$R(\lambda):X \times X
\rightarrow X \times X, ~~~\lambda \in H,$$ is called
\emph{dynamical Yang-Baxter  map}, with respect to $H$, $X$ and
$\phi$, if for any $\lambda \in H$ the map $R(\lambda)$ satisfies
the \emph{dynamical YB equation}

\begin{equation} \label{dYBeq}
R_{23}(\lambda)R_{13}(\phi(\lambda,X^{(2)}))R_{12}(\lambda)=
R_{12}(\phi(\lambda,X^{(3)})) R_{13}(\lambda)
R_{23}(\phi(\lambda,X^{(1)})).
\end{equation}
The maps $R_{12}(\lambda)$, $R_{12}(\phi(\lambda,X^{(3)}))$ etc, are
defined on the triple Cartesian product $X \times X \times X$, as
$$
R_{12}(\lambda)(u,v,w) = (R(\lambda)(u,v),w), \
R_{12}(\phi(\lambda,X^{(3)}))(u,v,w) =(R(\phi(\lambda,w))(u,v),w),
$$
for $u,v,w \in X$, and in a similar way $R_{13}$ and $R_{23}$. The
parameter $\lambda$ is called the dynamical parameter of the map
$R(\lambda)$ (and must not be confused with the parameters $\alpha,
~\beta$ of parametric YB maps). If $R(\lambda)$ does not depend on
$\lambda$, then it is a YB map, that is a map $R:X \times X
\rightarrow X \times X$ that satisfies the YB equation,
$R_{23}R_{13}R_{12}=R_{12}R_{13}R_{23}$, where by $R_{ij}$ for
$i,j=1,...,3$, we denote the action of the map $R$ on the $i$ and
$j$ factor of ${X} \times {X} \times {X}$.

A YB map $R:(X \times I) \times (X \times I) \rightarrow (X \times
I) \times (X \times I)$, with
\begin{equation} \label{pYB}
R:((x,\alpha),(y,\beta))\mapsto((u,\alpha),(v,\beta))= ((u(x,\alpha,
y,\beta),\alpha),(v(x,\alpha, y,\beta),\beta)),
\end{equation}
is called a \emph{parametric YB map} (\cite{ves4,ves2,ves3}). We
usually keep the parameters separately and denote a parametric YB
map as $R_{\alpha,\beta}(x,y):X\times X \rightarrow X \times X$.
From our point of view, the sets $X$ and $I$ have the structure of
an algebraic variety and the considered maps are birational. A {\em
Lax Matrix} of the YB map (\ref{pYB}) is a map $L:X\times I
\rightarrow Mat(n\times n)$,  that depends on a spectral parameter
$\zeta\in \mathbb{C}$, such that
\begin{equation} \label{laxmat}
L(u;\alpha)L(v;\beta)=L(y;\beta)L(x;\alpha).
\end{equation}
Furthermore, if equation (\ref{laxmat}) is equivalent to $(u,\
v)=R_{\alpha,\beta}(x,y)$ then $L(x;\alpha)$ will be called {\em
strong Lax matrix}.

\begin{definition}
A non empty set $L$ with a binary operation $\cdot :L \times L
\rightarrow L$ is called left quasigroup $(L,\cdot )$, if for every
$u, w \in L$ there is a unique $v \in L$ such that $u\cdot v=w$
(right quasigroups are defined in an analogous way).
\end{definition}
From this definition it turns out that in every left quasigroup one
additional operation is defined, namely  $\setminus :L \times L
\rightarrow L$, called \emph{left division}, with $u \setminus w =
v$ if and only if $u\cdot v=w$.

A left quasigroup $(L,\cdot )$ with the property stating that for
every $v, w \in L$ there is a unique $u \in L$ such that $u\cdot
v=w$, is called \emph{quasigroup}. In other words a quasigroup is a
left and right quasigroup. If there exist an element $e_l$ of a
quasigroup $L$ such that $e_l \cdot u=u$ for every $u \in L$, then
it is called \emph{left identity}, similarly $e_r \in L$ is called
right identity if $u \cdot e_r=u $, for every $u \in L$.
Furthermore, if there is an element $e \in L$ such that $u\cdot
e=e\cdot u =u$, for any $u \in G$, i.e. the left and right
identities coincide, then $(L,\cdot )$ is called \emph{loop}. We
usually omit the symbol $\cdot$ and write just $uv$ for $u \cdot v$.

Quasigroups and loops can be regarded as generalizations of groups.
In particular an associative loop is a group. For a detailed study
on quasigroups and loops we refer to \cite{pflu}.

Next, we present the construction of dynamical YB maps, introduced
by Shibukawa in \cite{shib2}. We begin with the following
definition.

\begin{definition}
A non-empty set $M$ equipped with a ternary operation $\mu:M \times
M \times M \rightarrow M$ is called a \emph{ternary system}
$(M,\mu).$
\end{definition}

Let $(L,\cdot )$ be a left quasigroup, $(M,\mu)$ a ternary system
and a bijective map $\pi:L \rightarrow M$. We consider the map
$R_\lambda: L \times L \rightarrow L \times L$, with
\begin{equation} \label{shYB}
R_\lambda(x,y)=(\eta_\lambda(y)(x),\xi_\lambda(x)(y)),
\end{equation}
where
\begin{align*}
\xi_\lambda(x)(y)&=\lambda \setminus \pi^{-1}
(\mu(\pi(\lambda),\pi(\lambda x), \pi((\lambda x)y))), \\
\eta_\lambda(x)(y)&=(\lambda \xi_\lambda(y)(x)) \setminus ((\lambda
y)x),  ~~~ \lambda,~x,~y \in L.
\end{align*}

\begin{theorem} $($Shibukawa$)$ \   \label{thshib}
The map $R_\lambda$ (\ref{shYB}) is a dynamical YB map with respect
to $L$, $L$ and $\phi:L \times L \rightarrow L$, with
$\phi(\lambda,x)=\lambda x$, if and only if
\begin{align}
\mu(a,\mu(a,b,c),\mu(\mu(a,b,c),c,d)) &= \mu(a,b,\mu(b,c,d)) ~~\text{και}  \nonumber\\
\mu(\mu(a,b,\mu(b,c,d)),\mu(b,c,d),d) &= \mu(\mu(a,b,c),c,d))
\label{mmm}
\end{align}
for every $a,b,c,d \in M$.
\end{theorem}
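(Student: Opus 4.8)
The strategy is a direct computation: substitute the explicit formula \eqref{shYB} into both sides of the dynamical YB equation \eqref{dYBeq} and reduce everything to the ternary operation $\mu$ on $M$ via the bijection $\pi$. First I would fix $\lambda \in L$ and a triple $(x,y,z) \in L^3$, and carefully track how the three "legs" $R_{12}$, $R_{13}$, $R_{23}$ act, remembering that on the dynamical side the parameter shifts according to $\phi(\lambda,x)=\lambda x$. The key bookkeeping device is that $\xi_\lambda(x)(y)$ is characterized by the identity $\lambda\,\xi_\lambda(x)(y) = \pi^{-1}\!\big(\mu(\pi(\lambda),\pi(\lambda x),\pi((\lambda x)y))\big)$; composing this with $\pi$ turns nested applications of $\xi$ into nested applications of $\mu$ with the "running product" $\lambda x$, $(\lambda x)y$, etc. playing the role of successive arguments. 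So I would set $a=\pi(\lambda)$, and introduce new symbols for $\pi$ of each running product that appears, reducing the first-component equality of the two sides of \eqref{dYBeq} to an identity purely among $\mu$-expressions.

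Next I would organize the comparison componentwise. The dynamical YB equation in $X\times X\times X$ has three output slots; I expect the identity coming from one particular slot (say the third) to reduce, after the $\pi$-substitution described above, exactly to the first relation in \eqref{mmm}, and the identity from another slot to reduce to the second relation, with the variables $a,b,c,d$ corresponding to $\pi(\lambda)$ and $\pi$ of three successive running products. The remaining slot should then follow automatically — this is where the $\eta_\lambda$ part of the map, defined by $\eta_\lambda(x)(y)=(\lambda\,\xi_\lambda(y)(x))\setminus((\lambda y)x)$, does its work: once the $\xi$-components on both sides agree and the same quantities $(\lambda y)x$ appear, the left-division is uniquely determined (here the left quasigroup axiom is used), so equality of the $\eta$-components is forced. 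This gives the "if" direction; for "only if", I would reverse the reduction: since $\pi$ is a bijection and $L$ is a left quasigroup, every $\mu$-identity of the required shape can be realized by choosing $\lambda$ and $x,y,z$ appropriately (e.g. take $\lambda=\pi^{-1}(a)$ and then solve successively for $x,y,z$ so that the running products hit $\pi^{-1}(b),\pi^{-1}(c),\pi^{-1}(d)$), so the dynamical YB equation forces \eqref{mmm}.

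The main obstacle I anticipate is purely organizational rather than conceptual: untangling the nested left-divisions and the dynamical parameter shifts without sign or index errors. In particular one must be disciplined about which running product each $\mu$ sees — on the left side of \eqref{dYBeq} the composition is $R_{23}(\lambda)\,R_{13}(\phi(\lambda,X^{(2)}))\,R_{12}(\lambda)$, so the middle factor's dynamical parameter is already the shifted one, and after the first map acts the "running product" feeding into $\xi$ is no longer simply built from the original $x,y,z$ but from their images. I would therefore first prove a small lemma isolating how $\lambda\,\xi_\lambda(x)(y)$ and $\lambda\,\eta_\lambda(x)(y)$ behave — effectively that $R_\lambda$ preserves the "product" in the sense that the new pair multiplied onto $\lambda$ reproduces $((\lambda y)x$ or the analogous expression — and then feed that lemma mechanically into both sides. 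Once that lemma is in hand, matching the two sides against \eqref{mmm} becomes a finite, checkable calculation.
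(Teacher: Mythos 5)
Your plan is essentially correct, but note that the paper offers no proof of Theorem \ref{thshib}: it is quoted from Shibukawa \cite{shib2}, so there is no in-paper argument to compare against line by line; the closest thing is the cube computation the paper does write out for Propositions \ref{prsh1} and \ref{prQuasi} (the map $T_{\alpha,\beta}$, the quantities $w_1,\dots,w_4$, and the two consistency identities), of which your route is the $\lambda$-dependent, non-parametric analogue. Concretely, your strategy goes through: with $a=\pi(\lambda)$, $b=\pi(\lambda x)$, $c=\pi((\lambda x)y)$, $d=\pi(((\lambda x)y)z)$ and the built-in invariance $(\lambda\,\xi_\lambda(x)(y))\,\eta_\lambda(y)(x)=(\lambda x)y$ (your ``small lemma'', which holds by the very definition of $\eta_\lambda$), equality of the third components of the two sides of \eqref{dYBeq} is equivalent to the first identity in \eqref{mmm}; equality of the second components, \emph{given} the third, is equivalent to the second identity (cancel the common left factor, using that left division is invertible); and equality of the first components is then automatic, since both are left divisions of the conserved product $((\lambda x)y)z$ by quantities that the second identity makes equal. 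Your argument for the converse is also the right one: since $\pi$ is a bijection and $L$ a left quasigroup, $(\lambda,x,y,z)\mapsto(a,b,c,d)$ is onto $M^4$, so the componentwise equalities force \eqref{mmm} for all arguments. Two bookkeeping points to fix when writing it up: for $R_\lambda(x,y)=(\eta_\lambda(y)(x),\xi_\lambda(x)(y))$ the conserved quantity is $(\lambda x)y$, not $(\lambda y)x$ (the latter corresponds to $\eta_\lambda(x)(y)$); and the second-slot comparison yields the second relation of \eqref{mmm} only after the third-slot relation has been invoked, so in the ``only if'' direction the two relations must be extracted in that order.
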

The dynamical YB map (\ref{shYB}) is a generalization of the YB map
that was presented in \cite{LYZ}, on any group that acts on itself
and the action satisfies a compatibility condition.
%%%%%%%%%%%%%%%%%%%%%%%%%%%%%%%%%%%%%%%%%%%%%%%%%%%%%%%%%%%%%%%%%%%%%%%%%%%%%%%%%%%%%%%%%%
%%%%%%%%%%%%%%%%%%%%%%%%%%%%%%%%%%%%%%%%%%%%%%%%%%%%%%%%%%%%%%%%%%%%%%%%%%%%%%%%%%%%%%%%%%
\section{3--Dimensional Consistency and YB Maps} \label{sec3d}
We are going to show that the conditions of theorem \ref{thshib} can
be interpreted as 3-dimensional consistency property of a quad-graph
equation.

We consider a parametric ternary operation $\mu_{\alpha,\beta}:X
\times X \times X \rightarrow X$, $\alpha,\beta \in I$, the
corresponding parametric equation
\begin{equation} \label{mcube}
w=\mu_{\alpha,\beta}(a,b,c),
\end{equation}
the initial values $a,b,c,d$ placed on the vertices of a cube and
the parameters $\alpha, \ \beta, \ \gamma$ assigned to the edges, as
shown in Fig.\ref{figcube2}. All parallel edges carry the same
parameter.

\begin{figure}[h]
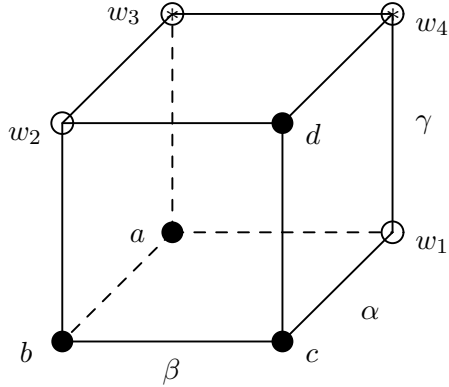

\bigskip

\centertexdraw{ \setunitscale 1.14 \linewd 0.01

\move (3 0) \linewd 0.01 \lpatt(0.05 0.05) \lvec (3.0 1.0)  \move (3
0) \lvec(2.5 -0.5)  \move (3 0) \lvec(4 0) \lpatt() \linewd 0.01
\lvec(4 1.0) \lvec(3 1) \lvec(2.5 0.5)  \lvec(2.5 -0.5) \lvec(3.5
-0.5) \lvec(4 0) \move(3.5 -0.5) \lvec(3.5 0.5) \lvec(4 1) \move(3.5
0.5) \lvec(2.5 0.5)

\move(2.5 -0.5) \fcir f:0.0 r:0.05 \move(3.5 -0.5) \fcir f:0.0
r:0.05 \move(4 0) \lcir r:0.05 \move(4 1) \lcir r:0.05 \move(3 0)
\fcir f:0.0 r:0.05 \move(3 1) \lcir r:0.05 \move(2.5 0.5) \lcir
r:0.05 \move(3.5 0.5) \fcir f:0.0 r:0.05

\htext (3.6 -0.6) {$c$} \htext (4.1 -0.1) {$w_{1}$} \htext (2.3
-0.6) {$b$} \htext (3.6 0.4) {$d$} \htext (3.85 -0.4) {$\alpha$}
\htext (2.95 -0.7) {$\beta$} \htext (4.1 0.45) {$\gamma$} \htext
(2.7 0.95) {$w_{3}$} \htext (2.8 -0.05) {$a$} \htext (4.1 0.9)
{$w_{4}$} \htext (2.25 0.4) {$w_2$} \htext (3.965 0.92){*} \htext
(2.961 0.92){*}

} \caption{Consistency around the cube } \label{figcube2}
\end{figure}

The values $w_{1}$ and $w_{2}$ are determined uniquely from the
initial conditions on the cube and equation (\ref{mcube}) as follows
$$w_{1}=\mu_{\alpha,\beta}(a,b,c), \ w_2=\mu_{\beta,\gamma}(b,c,d).$$
From the other hand, the values $w_3$ and $w_4$ can be derived by
two different ways. Following Fig.\ref{figcube2} we have that
\begin{align*}
 w_3 &=\mu_{\beta,\gamma}(a,w_1,\mu_{\alpha,\gamma}(w_1,c,d)) \ \text{or} \  w_3=\mu_{\alpha,\gamma}(a,b,w_2), \\
 w_4 &=\mu_{\alpha,\beta}(\mu_{\alpha,\gamma}(a,b,w_2),w_2,d) \
\text{or} \ w_4=\mu_{\alpha,\gamma}(w_1,c,d)).
\end{align*}
We will say that the equation (\ref{mcube}) satisfies the
\emph{3-dimensional (3D) consistency or compatibility condition} if
$w_3$ and $w_4$ are uniquely defined, that is the ternary operation
satisfies
\begin{align}
\mu_{\beta,\gamma}(a,\mu_{\alpha,\beta}(a,b,c),\mu_{\alpha,\gamma}(\mu_{\alpha,\beta}(a,b,c),c,d))
&=\mu_{\alpha,\gamma}(a,b,\mu_{\beta,\gamma}(b,c,d)), \label{3d1} \\
\mu_{\alpha,\beta}(\mu_{\alpha,\gamma}(a,b,\mu_{\beta,\gamma}(b,c,d)),\mu_{\beta,\gamma}(b,c,d),d)
&= \mu_{\alpha,\gamma}(\mu_{\alpha,\beta}(a,b,c),c,d)). \label{3d2}
\end{align}

This form of 3D compatibility can be traced back in \cite{adlery}
(see also \cite{pt}). An equivalent formulation was introduced both
in \cite{nij} (in order to derive Lax pair for the discrete
Krichever--Novikov equation) and in \cite{BS}, giving rise to the
ABS classification \cite{ABS1} of integrable $D_4$-symmetric
quad-graph equations.

Equations (\ref{3d1}) and (\ref{3d2}) are a parametric version of
the conditions of theorem \ref{thshib}. According to this
observation we give the next definition.

\begin{definition}
A non-empty set $X$ equipped with a parametric ternary operation
$\mu_{\alpha,\beta}:X \times X \times X \rightarrow X$ that
satisfies (\ref{3d1}) and (\ref{3d2}), is called 3D compatible
ternary system.
\end{definition}

So, according to theorem \ref{thshib}, every 3D compatible ternary
system on a quasigroup gives rise to a dynamical YB map. A natural
question is to  investigate instances where the dynamical YB maps
obtained in this way turn out to be independent of the dynamical
parametre $\lambda$ i.e. they are YB maps. In order to eliminate the
dynamical parameter $\lambda$ some extra symmetry conditions on
equations are helpful. In case of groups because of  associativity,
it is enough to consider homogeneous 3D consistent equations in the
following sense.

\begin{definition} \label{defhomog}
A 3D compatible ternary system on a quasigroup $(L,\cdot)$ is called
homogeneous if $\mu_{\alpha,\beta}(\lambda a,\lambda b,\lambda
c)=\lambda \mu_{\alpha,\beta}(a,b,c),$ for every $a,b, c,  \lambda
\in L$ and $\alpha,  \beta \in I $.
\end{definition}

The next proposition gives the parametric version of theorem
\ref{thshib}, for homogeneous 3D compatible ternary systems on
groups, that yield parametric YB maps.

\begin{proposition} \label{prsh1}
Let $(L,\mu_{\alpha,\beta})$ be a homogeneous 3D compatible ternary
system on the group $L$. The map
\begin{equation} \label{pYBq}
R_{\alpha,\beta}(x,y)=((\mu_{\alpha,\beta}(e,x,xy))^{-1}xy,\mu_{\alpha,\beta}(e,x,xy))
\end{equation}
is a parametric YB map.
\end{proposition}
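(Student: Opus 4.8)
The strategy is to derive the parametric YB map \eqref{pYBq} as a specialization of Shibukawa's dynamical YB map \eqref{shYB} and then show that the homogeneity hypothesis makes the dynamical parameter $\lambda$ drop out. First I would take the left quasigroup in Theorem \ref{thshib} to be the group $L$ itself, with the multiplication as the binary operation, the ternary system to be $(L,\mu_{\alpha,\beta})$ for fixed $\alpha,\beta$, and $\pi=\mathrm{id}$. Since $L$ is a group, left division is $u\setminus w=u^{-1}w$, so formula \eqref{shYB} specializes to
\begin{align*}
\xi_\lambda(x)(y)&=\lambda^{-1}\mu_{\alpha,\beta}(\lambda,\lambda x,(\lambda x)y),\\
\eta_\lambda(x)(y)&=(\lambda\,\xi_\lambda(y)(x))^{-1}(\lambda y)x.
\end{align*}
By Theorem \ref{thshib} this $R_\lambda$ is a dynamical YB map with respect to $L$, $L$ and $\phi(\lambda,x)=\lambda x$, precisely because \eqref{3d1}--\eqref{3d2} are the parametric form of \eqref{mmm}.

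Next I would use homogeneity. Applying Definition \ref{defhomog} with the scalar $\lambda^{-1}$ gives $\lambda^{-1}\mu_{\alpha,\beta}(\lambda,\lambda x,(\lambda x)y)=\mu_{\alpha,\beta}(e,x,xy)$, whence $\xi_\lambda(x)(y)=\mu_{\alpha,\beta}(e,x,xy)$, independent of $\lambda$. Substituting this into the formula for $\eta$, and again using that $L$ is a group, one computes
\[
\eta_\lambda(y)(x)=\bigl(\lambda\,\xi_\lambda(x)(y)\bigr)^{-1}(\lambda x)y
=\xi_\lambda(x)(y)^{-1}\lambda^{-1}\lambda xy=\bigl(\mu_{\alpha,\beta}(e,x,xy)\bigr)^{-1}xy,
\]
also independent of $\lambda$. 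Hence $R_\lambda(x,y)=\bigl(\eta_\lambda(y)(x),\xi_\lambda(x)(y)\bigr)$ coincides with the map $R_{\alpha,\beta}(x,y)$ of \eqref{pYBq} for every $\lambda$; in particular the dynamical YB equation \eqref{dYBeq}, with every occurrence of $\phi(\lambda,\cdot)$ now irrelevant, collapses to the ordinary YB equation $R_{23}R_{13}R_{12}=R_{12}R_{13}R_{23}$. I would also note that the first component is well defined and $R_{\alpha,\beta}$ maps into $X\times X$ keeping the parameters in their slots, so it has the form \eqref{pYB} required of a parametric YB map.

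The only genuine subtlety — and the step I expect to be the main obstacle to write cleanly — is making the reduction to the plain YB equation fully rigorous: one must check that with $\xi$ and $\eta$ independent of $\lambda$ each of the six factors $R_{ij}(\lambda)$, $R_{ij}(\phi(\lambda,X^{(k)}))$ appearing in \eqref{dYBeq} literally equals the corresponding $R_{ij}$ built from $R_{\alpha,\beta}$, so that \eqref{dYBeq} becomes the YB equation verbatim rather than merely "morally". This is a matter of unwinding the definitions of $R_{12}(\phi(\lambda,X^{(3)}))$ etc.\ given after \eqref{dYBeq}: since $R_\lambda$ does not depend on $\lambda$, $R_{12}(\phi(\lambda,w))=R_{12}$ on the nose, and likewise for the other indices. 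Everything else is the routine group-theoretic simplification sketched above; no further use of \eqref{3d1}--\eqref{3d2} is needed beyond invoking Theorem \ref{thshib}.
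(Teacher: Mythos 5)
Your reduction to Theorem \ref{thshib} does not go through as written, for two related reasons. First, Theorem \ref{thshib} concerns a single, parameter-free ternary operation $\mu$ and requires the conditions (\ref{mmm}), in which the \emph{same} $\mu$ occupies every slot. Your hypothesis is 3D compatibility of the family $\mu_{\alpha,\beta}$, i.e.\ equations (\ref{3d1})--(\ref{3d2}), which intertwine $\mu_{\alpha,\beta}$, $\mu_{\alpha,\gamma}$ and $\mu_{\beta,\gamma}$; freezing a pair $(\alpha,\beta)$ and demanding that $\mu_{\alpha,\beta}$ alone satisfy (\ref{mmm}) is a different condition, recovered from (\ref{3d1})--(\ref{3d2}) only when $\alpha=\beta=\gamma$, and it is not implied by the hypotheses of Proposition \ref{prsh1}. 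Second, even if that application of Theorem \ref{thshib} were legitimate, the conclusion you would obtain (after your correct observation that homogeneity makes $\xi_\lambda(x)(y)=\mu_{\alpha,\beta}(e,x,xy)$ and $\eta_\lambda(y)(x)=\mu_{\alpha,\beta}(e,x,xy)^{-1}xy$ independent of $\lambda$) is the identity $R^{23}_{\alpha,\beta}R^{13}_{\alpha,\beta}R^{12}_{\alpha,\beta}=R^{12}_{\alpha,\beta}R^{13}_{\alpha,\beta}R^{23}_{\alpha,\beta}$, with the same parameter pair in all three factors. What Proposition \ref{prsh1} asserts is the \emph{parametric} YB property, i.e.\ the YB equation on $X\times I$ as in (\ref{pYB}), which in components reads $R^{23}_{\beta,\gamma}R^{13}_{\alpha,\gamma}R^{12}_{\alpha,\beta}=R^{12}_{\alpha,\beta}R^{13}_{\alpha,\gamma}R^{23}_{\beta,\gamma}$ for all triples $\alpha,\beta,\gamma$. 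This three-parameter identity is exactly where the full strength of (\ref{3d1})--(\ref{3d2}) enters, and your argument never engages with it.

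To repair the argument you would either have to formulate and prove a parametric (coloured) version of Theorem \ref{thshib} --- that 3D compatibility of the family $\mu_{\alpha,\beta}$ yields the corresponding parametric dynamical YB equation with factors $R_{\alpha,\beta}$, $R_{\alpha,\gamma}$, $R_{\beta,\gamma}$ --- after which your elimination of $\lambda$ by homogeneity would indeed finish the proof, or else verify the three-parameter identity directly. The paper does the latter: it writes $R_{\alpha,\beta}(x,y)=T_{\alpha,\beta}(e,x,x\ast y)$ with $T_{\alpha,\beta}(a,b,c)=(\mu_{\alpha,\beta}(a,b,c)\setminus c,\ a\setminus\mu_{\alpha,\beta}(a,b,c))$ as in (\ref{Tmap}), introduces the cube values $w_1,\dots,w_4$ whose two alternative expressions agree precisely by (\ref{3d1}) and (\ref{3d2}) with three distinct parameters, and uses homogeneity to identify the six edge maps on the cube, from which $x''=\tilde{\tilde{x}}$, $y''=\tilde{\tilde{y}}$, $z''=\tilde{\tilde{z}}$ follow. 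Your computation that homogeneity removes the dynamical parameter is correct and plays the same role that homogeneity plays in the paper's proof, but by itself it does not establish the proposition.
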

The proof of this proposition is given below together with the proof
of proposition \ref{prQuasi}. The corresponding YB map for Abelian
groups with additive notation becomes
\begin{equation} \label{vshib}
R_{\alpha,\beta}(x,y)=((x+y-\mu_{\alpha,\beta}(0,x,x+y)),\mu_{\alpha,\beta}(0,x,x+y)).
\end{equation}

In the more general case of quasigroups the homogeneous condition is
not always sufficient. We are not going to investigate under which
algebraic conditions one can eliminate the dynamical parameter, but
instead we consider the following particular cases.

We consider the set $L=\mathbb{C}\setminus \{0 \}$ with the binary
operation $a\ast b= \frac{b}{a}$. This is a quasigroup with left
identity element $e_l=1$. In this quasigroup a symmetry condition of
3D compatible ternary systems that provides parametric YB maps is
given below.

\begin{proposition} \label{prQuasi}
Let $(L,\mu_{\alpha,\beta})$ be a 3D compatible ternary system on
the quasigroup $(L,*)$, such that
\begin{equation} \label{symQuasi}
\lambda
\mu_{\alpha,\beta}(a,b,c)=\mu_{\alpha,\beta}(\frac{a}{\lambda},\lambda
b,\frac{c}{\lambda}), \ \ \text{for every} \ \lambda \in L \
\text{and} \ \alpha, \beta \in I,
\end{equation}
then the map
\begin{equation} \label{pYBq2}
R_{\alpha,\beta}(x,y)=(\frac{y}{x}\mu_{\alpha,\beta}(1,x,\frac{y}{x}),\mu_{\alpha,\beta}(1,x,\frac{y}{x}))
\end{equation}
is a parametric YB map.
\end{proposition}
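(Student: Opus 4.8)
The plan is to obtain the map (\ref{pYBq2}) as the dynamical-parameter-free reduction of Shibukawa's dynamical YB map, and then to use that a dynamical YB map which is constant in its dynamical parameter is an (here parametric) YB map; this is the same mechanism that proves Proposition \ref{prsh1}, which is why the two proofs are given together.

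First I would record the structure of $(L,\ast)$ with $a\ast b=b/a$: the left identity is $e_l=1$, the left division is $u\setminus w=uw$ (ordinary multiplication in $\mathbb{C}^{\ast}$), the action is $\phi(\lambda,x)=\lambda\ast x=x/\lambda$, and hence $(\lambda\ast x)\ast y=\lambda y/x$. Taking $\pi=\mathrm{id}$ so that $M=X=L$, Shibukawa's construction (\ref{shYB}) applies to the ternary operation $\mu_{\alpha,\beta}$ --- the conditions (\ref{mmm}) being exactly the $3$D compatibility (\ref{3d1})--(\ref{3d2}), with the parameters carried along the three strands --- and yields a dynamical YB map $R_\lambda$. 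It then remains to evaluate its two components in this quasigroup.

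For the second component,
$$\xi_\lambda(x)(y)=\lambda\setminus\mu_{\alpha,\beta}\bigl(\lambda,\lambda\ast x,(\lambda\ast x)\ast y\bigr)=\lambda\,\mu_{\alpha,\beta}\Bigl(\lambda,\tfrac{x}{\lambda},\tfrac{\lambda y}{x}\Bigr),$$
and applying the symmetry condition (\ref{symQuasi}) with this $\lambda$ to the arguments $a=\lambda$, $b=x/\lambda$, $c=\lambda y/x$ collapses the right-hand side to $\mu_{\alpha,\beta}(1,x,y/x)$, which is independent of $\lambda$ and is the second entry of (\ref{pYBq2}). Writing $v:=\xi_\lambda(x)(y)=\mu_{\alpha,\beta}(1,x,y/x)$ and substituting into $\eta_\lambda(y)(x)=(\lambda\ast v)\setminus\bigl((\lambda\ast x)\ast y\bigr)=(v/\lambda)\setminus(\lambda y/x)$, and using once more $u\setminus w=uw$, the powers of $\lambda$ cancel and leave $\tfrac{y}{x}\,\mu_{\alpha,\beta}(1,x,\tfrac{y}{x})$, the first entry of (\ref{pYBq2}). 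Thus $R_\lambda$ does not depend on $\lambda$ and coincides with $R_{\alpha,\beta}$.

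Since $R_\lambda$ is a dynamical YB map with respect to $L$, $L$, $\phi(\lambda,x)=\lambda x$ and is $\lambda$-independent, in (\ref{dYBeq}) every shifted factor $R(\phi(\lambda,\cdot))$ equals $R$, so the identity reduces --- with $\alpha,\beta,\gamma$ attached to the three copies of $X$ --- to the parametric YB equation $R_{23}R_{13}R_{12}=R_{12}R_{13}R_{23}$ for $R_{\alpha,\beta}$; hence (\ref{pYBq2}) is a parametric YB map (and, replacing quasigroup data by group data and (\ref{symQuasi}) by homogeneity, the same computation proves Proposition \ref{prsh1}). I expect the real work to be in two places: carrying out the $\eta_\lambda$ reduction carefully, since it nests several left-divisions and the cancellation of $\lambda$ hinges on feeding (\ref{symQuasi}) precisely the right arguments; and making the last step rigorous, i.e. verifying that the parameters enter Shibukawa's construction so that the three factors of (\ref{dYBeq}) become exactly $R_{\alpha,\beta}$, $R_{\alpha,\gamma}$, $R_{\beta,\gamma}$ in the correct slots --- this is where the full parametric conditions (\ref{3d1})--(\ref{3d2}), rather than the single-operation identities (\ref{mmm}), are genuinely used.
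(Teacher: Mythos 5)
Your $\lambda$-elimination is correct: with $a\ast b=b/a$, $e_l=1$, $u\setminus w=uw$ one indeed gets $\xi_\lambda(x)(y)=\lambda\,\mu_{\alpha,\beta}(\lambda,x/\lambda,\lambda y/x)=\mu_{\alpha,\beta}(1,x,y/x)$ by (\ref{symQuasi}) and then $\eta_\lambda(y)(x)=\frac{y}{x}\mu_{\alpha,\beta}(1,x,\frac{y}{x})$, so Shibukawa's formula (\ref{shYB}) does reproduce the map (\ref{pYBq2}) independently of $\lambda$. The problem is the step you defer to the end: Theorem \ref{thshib} is stated for a \emph{single} ternary operation $\mu$ satisfying (\ref{mmm}) and yields the dynamical YB equation (\ref{dYBeq}) for a \emph{single} map $R(\lambda)$; what Proposition \ref{prQuasi} asserts is the \emph{parametric} YB equation, in which the three factors are the three different maps $R_{\alpha,\beta}$, $R_{\alpha,\gamma}$, $R_{\beta,\gamma}$ built from the three different operations $\mu_{\alpha,\beta}$, $\mu_{\alpha,\gamma}$, $\mu_{\beta,\gamma}$. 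That parametric analogue of Theorem \ref{thshib} is nowhere available: it is not what the paper quotes, it does not follow by fixing $(\alpha,\beta)$ (note that (\ref{3d1})--(\ref{3d2}) do not even imply (\ref{mmm}) for a fixed $\mu_{\alpha,\beta}$ unless all parameters coincide), and it cannot be obtained by the usual trick of absorbing parameters into the variables, because for ternary systems the parameters live on the \emph{edges} of the quadrilateral while the arguments of $\mu$ live on the vertices, so there is no consistent way to package $(x,\alpha)$ as a single quasigroup element fed to a parameter-free $\mu$. You explicitly flag this as ``the real work'', but you neither carry it out nor indicate how it would go, and it is precisely the content of the proposition rather than a routine verification.

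For comparison, the paper does not route through Theorem \ref{thshib} at all: it proves the parametric YB identity directly, introducing $T_{\alpha,\beta}(a,b,c)=(\mu_{\alpha,\beta}(a,b,c)\setminus c,\,a\setminus\mu_{\alpha,\beta}(a,b,c))$, writing $R_{\alpha,\beta}(x,y)=T_{\alpha,\beta}(e_l,x,x\ast y)$, and then tracking both sides $R^{23}_{\beta,\gamma}R^{13}_{\alpha,\gamma}R^{12}_{\alpha,\beta}$ and $R^{12}_{\alpha,\beta}R^{13}_{\alpha,\gamma}R^{23}_{\beta,\gamma}$ around the cube: the symmetry condition (\ref{symQuasi}) is used at each step to re-express the intermediate values as $T$ applied to suitable triples $(e_l,\cdot,\cdot)$, $(w_1,\cdot,\cdot)$, $(w_3,\cdot,\cdot)$, and the 3D compatibility relations (\ref{3d1}), (\ref{3d2}) with the mixed parameter pairs are invoked to identify $w_3$ and $w_4$ obtained along the two routes; equality of the two sides then falls out componentwise. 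So your outline captures correctly why the dynamical parameter disappears, but to make it a proof you would have to supply exactly this three-parameter verification (or prove a parametric version of Shibukawa's theorem), which is the missing core of the argument.
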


\begin{proof} We present the proof of propositions \ref{prsh1} and
\ref{prQuasi}. Let us denote by
$x',x'',y',y'',z',z'',\tilde{x},...,\tilde{\tilde{z}}$ the values
defined each time by the corresponding map (\ref{pYBq}) or
(\ref{pYBq2}) as follows,
\begin{align*}
R_{\alpha \beta}^{12}(x,y,z)=(x',y',z),  R_{\alpha, \gamma}^{13}
R_{\alpha\beta}^{12}(x,y,z)=(x'',y',z'), R_{\beta \gamma}^{23}
R_{\alpha,\gamma}^{13}
R_{\alpha \beta}^{12}(x,y,z)= (x'',y'',z''), \\
R_{\beta \gamma}^{23}(x,y,z)=(x,\tilde{y},\tilde{z}), \ \ R_{\alpha
\gamma}^{13} R_{\beta
\gamma}^{23}(x,y,z)=(\tilde{x},\tilde{y},\tilde{\tilde{z}}), \ \ \
R_{\alpha \beta}^{12} R_{\alpha,\gamma}^{13} R_{\beta
\gamma}^{23}(x,y,z)=
(\tilde{\tilde{x}},\tilde{\tilde{y}},\tilde{\tilde{z}}). \ \ \ \
\end{align*}
For convenience, we consider the map $T_{\alpha,\beta}:L\times L
\times L \rightarrow L\times L$, defined by
\begin{equation} \label{Tmap}
T_{\alpha,\beta}:(a, b, c) \mapsto
(\mu_{\alpha,\beta}(a,b,c)\setminus c,a\setminus
\mu_{\alpha,\beta}(a,b,c)). \end{equation} In Prop. \ref{prsh1},
where $L$ is a group $a\setminus b=a^{-1}b$, while for the
quasigroup of Prop. \ref{prQuasi} $a\setminus b=ab$, for every $a,b
\in L$. In both cases we can verify that
$R_{\alpha,\beta}(x,y)=T_{\alpha,\beta}(e_l,x,x\ast y),$ where $e_l$
is the corresponding left identity element (the identity element $e$
for Prop. \ref{prsh1} and $1$ for Prop. \ref{prQuasi}) and by $\ast$
we denote the corresponding binary quasigroup operation in both
cases. Next, we set
\begin{eqnarray*}
&& w_{1}:=\mu_{\alpha,\beta}(e_l,x,x \ast y), \
w_2:=\mu_{\beta,\gamma}(x,x \ast y,(x \ast y)\ast z),  \\
&& w_3:=\mu_{\beta,\gamma}(e_l,w_1,\mu_{\alpha,\gamma}(w_1,x\ast y,(x\ast y)\ast z))=\mu_{\alpha,\gamma}(e_l,x,w_2), \\
&& w_4:=\mu_{\alpha,\beta}(\mu_{\alpha,\gamma}(e_l,x,w_2),w_2,(x\ast
y)\ast z)=\mu_{\alpha,\gamma}(w_1,x\ast y,(x\ast y)\ast z)),
\end{eqnarray*}
where for $w_3$ and $w_4$ the last equality is obtained by equations
(\ref{3d1}) and (\ref{3d2}) respectively.

\begin{figure}[h]
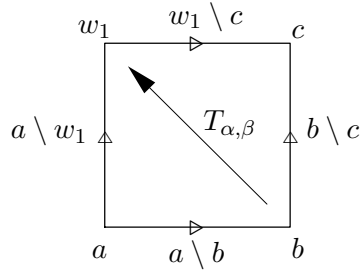

\centertexdraw{ \setunitscale 0.12 \linewd 0.01 \move (7 1) \linewd
0.03 \arrowheadtype t:F \avec(1 7) \lpatt( ) \move (0 0) \linewd
0.05 \lvec(8 0)  \lvec (8 8)  \lvec (0 8)  \lvec (0 0) \lpatt( )
\htext (3.7 -0.4){$\triangleright$}\htext (-0.5 -1.4){$a$} \htext
(2.8 -1.9){$a\setminus b$} \htext(8.1 -1.4){$ b$} \htext(8.1 8.2){$
c$} \htext(-1.1 8.2){$ w_1$}\htext (7.7
3.5){\tiny{$\bigtriangleup$}} \htext (8.75 3.5){$b\setminus c$}
\htext (3.7 7.6){$\triangleright$} \htext (2.8 8.5){$w_1\setminus
c$} \htext (-0.31 3.5){\tiny{$\bigtriangleup$}}\htext (-4 3.5){$
a\setminus w_1$} \htext (4.3 3.8){$T_{\alpha,\beta}$}} \caption{The
map $T_{\alpha,\beta}$ assigned to the edges of a quadrilateral}
\label{quadril}
\end{figure}

\begin{figure}[h]
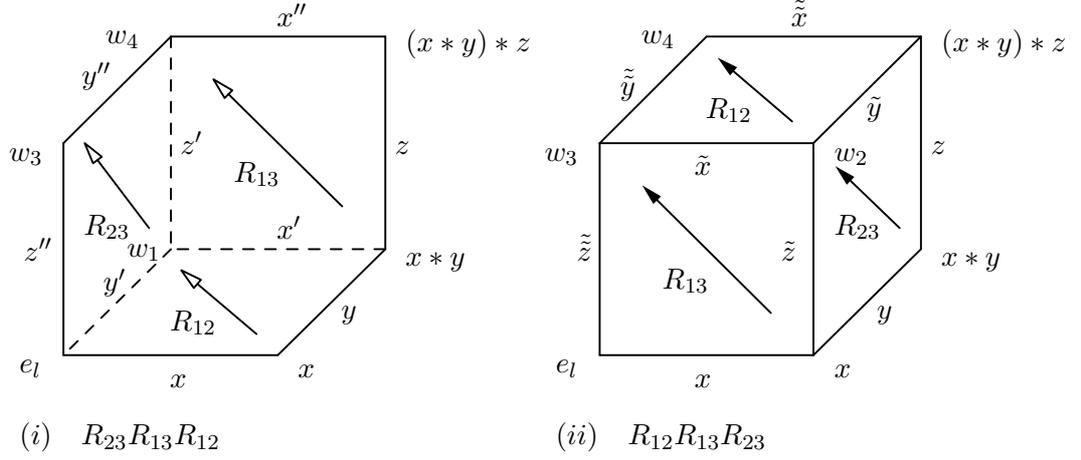

\centertexdraw{ \setunitscale 1.11 \linewd 0.06

\arrowheadsize l:0.1 w:0.05

\move (1. 0)  \linewd 0.01 \lvec(0.5 -0.5) \lpatt() \lvec(0.5
0.5)\lpatt() \lvec(1.0 1.0) \lpatt() \move(0.5 0.5)
\lpatt()\lvec(-0.5  0.5)\lpatt() \lvec(0.0 1.0)\lpatt(0.05 0.05)

\move(-2.5 0.0) \lpatt(0.05 0.05) \lvec(-3.0 -0.5) \lpatt()
\lvec(-2.0 -0.5) \lpatt() \lvec(-1.5 0.0) \lpatt(0.05 0.05)
\lvec(-2.5 -0.0)  \lpatt() \move(-1.5 0.0) \lvec(-1.5 1.0)\lpatt()
\lvec(-2.5 1.0) \lpatt(0.05 0.05) \lvec(-2.5 0.0) \lpatt()
\move(-3.0 -0.5) \lvec(-3.0 0.5) \lpatt()

\lvec(-2.5 1.0) \lpatt() \move(0 1) \lvec(1 1) \lpatt()\lvec(1
0)\lpatt() \move(-0.5 0.5) \lvec(-0.5 -0.5) \lpatt() \lvec(0.5
-0.5)\lpatt()

\move(-2.6 0.1) \avec(-2.9 0.5)\lpatt() \move(-1.7 0.2) \avec(-2.3
0.8)\lpatt() \move(-2.1 -0.4) \avec(-2.45 -0.1)\lpatt()
\arrowheadtype t:F \move(0.9 0.1) \avec(0.60 0.39)\lpatt() \move(0.3
-0.3) \avec(-0.3 0.3)\lpatt() \move(0.4 0.6) \avec(0.05 0.9)\lpatt()

\htext (-0.05 -0.65) {$x$} \htext (0.8 -0.35) {$y$} \htext (1.05
0.45) {$z$} \htext (0.35 -0.05) {$\tilde{z}$} \htext (0.75 0.6)
{$\tilde{y}$} \htext (-0.05 0.35) {$\tilde{x}$} \htext (-0.6 -0.05)
{$\tilde{\tilde{z}}$} \htext (-0.4 0.7) {$\tilde{\tilde{y}}$} \htext
(0.4 1.05) {$\tilde{\tilde{x}}$}

\htext (-2.5 -0.65) {$x$} \htext (-1.7 -0.35) {$y$} \htext (-1.45
0.45) {$z$} \htext (-2.0 0.05) {$x'$} \htext (-2.81 -0.22) {$y'$}
\htext (-2.0 1.05) {$x''$} \htext (-2.45 0.45) {$z'$} \htext (-2.91
0.75) {$y''$} \htext (-3.18 -0.05) {$z''$}

\htext (0.6 0.05) {$R_{23}$} \htext (-0.2 -0.2) {$R_{13}$} \htext
(0.0 0.6) {$R_{12}$}

\htext (-2.5 -0.4) {$R_{12}$} \htext (-2.2 0.3) {$R_{13}$} \htext
(-2.9 0.05) {$R_{23}$}

\htext (-3.2 -0.95)  {$(i) \ \ \ R_{23}R_{13}R_{12}$} \htext (-0.7
-0.95) {$(ii) \ \ \ R_{12}R_{13}R_{23}$}

\htext (0.6 -0.6) {$x$} \htext (1.1 -0.1) {$x\ast y$} \htext (-0.7
-0.6) {$e_l$} \htext (.6 0.4) {$w_2$}  \htext (-0.3 0.95) {$w_{4}$}
\htext (1.1 0.9) {$(x\ast y)\ast z$} \htext (-0.75 0.4) {$w_3$}

\htext (-1.9 -0.6) {$x$} \htext (-1.4 -0.1) {$x\ast y$} \htext (-3.2
-0.6) {$e_l$}  \htext (-2.8 0.95) {$w_{4}$} \htext (-2.7 -0.05)
{$w_1$} \htext (-1.4 0.9) {$(x\ast y)\ast z$} \htext (-3.25 0.4)
{$w_3$}

} \caption{Cubic representation of the Yang--Baxter property}
\label{CYB}
\end{figure}

Using the previous relations, the definition of $T_{\alpha,\beta}$
and the corresponding symmetry condition $\lambda
\mu_{\alpha,\beta}(a,b,c)=\mu_{\alpha,\beta}(\lambda a,\lambda
b,\lambda c)$ for proposition \ref{prsh1} and  $\lambda
\mu_{\alpha,\beta}(a,b,c)=\mu_{\alpha,\beta}(\frac{a}{\lambda},\lambda
b,\frac{c}{\lambda})$ for proposition \ref{prQuasi}, we can verify
that
\begin{align*}
(x',y')=T_{\alpha,\beta}(e_l,x,x\ast y), \
(x'',z')=T_{\alpha,\gamma}(w_1,x\ast y,(x\ast y)\ast z), \
(y'',z'')=T_{\beta,\gamma}(e_l,w_1,w_4), \\
(\tilde{y},\tilde{z})=T_{\beta,\gamma}(x,x\ast y,(x\ast y)\ast z), \
(\tilde{x},\tilde{\tilde{z}})=T_{\alpha,\gamma}(e_l,x,w_2), \
(\tilde{\tilde{x}},\tilde{\tilde{y}})=T_{\alpha,\beta}(w_3,w_2,(x\ast
y)\ast z),
\end{align*}
and from (\ref{Tmap}) we derive $x''=w_4 \setminus ((x\ast y)\ast
z)=\tilde{\tilde{x}}, \ y''=w_3 \setminus w_4=\tilde{\tilde{y}}$ and
$z''=w_3=\tilde{\tilde{z}}$.
\end{proof}
A similar construction on the loop $(\mathbb{C},*)$, with $a\ast
b=b-a$, satisfying an equivalent symmetry condition, can be applied
as well. In this case the corresponding symmetry condition of
proposition \ref{prQuasi} becomes
\begin{equation} \label{symLoop}
\lambda
+\mu_{\alpha,\beta}(a,b,c)=\mu_{\alpha,\beta}({a}-{\lambda},\lambda
+ b,{c}-{\lambda}),
\end{equation}
and the induced parametric YB map is
\begin{equation} \label{pYBLoop}
R_{\alpha,\beta}(x,y)=(\mu_{\alpha,\beta}(0,x,{y}-{x})+y-x,\mu_{\alpha,\beta}(0,x,{y}-{x})).
\end{equation}
The proof is similar with the proof of proposition \ref{prQuasi} by
considering the operation $a \ast b=b-a$, the left identity element
$e_l=0$ and the left division $a \setminus b= a+b$.

We summarize the latter results in the next proposition.

\begin{proposition}
Let $(L,\mu_{\alpha,\beta})$ be a 3D compatible ternary system on an
Abelian group $(L,\cdot )$, such that
\begin{equation} \label{symQuasiab}
\lambda \cdot \mu_{\alpha,\beta}(a,b,c)=\mu_{\alpha,\beta}(a \cdot
{\lambda}^{-1},\lambda \cdot b,{c} \cdot {\lambda}^{-1}), \ \
\text{for every} \ \lambda \in L \ \text{and} \ \alpha, \beta \in I,
\end{equation}
then the map
\begin{equation} \label{pYBq2ab}
R_{\alpha,\beta}(x,y)=({y}\cdot{x}^{-1} \cdot
\mu_{\alpha,\beta}(e,x,{y}\cdot{x}^{-1}),\mu_{\alpha,\beta}(e,x,{y}\cdot{x}^{-1}))
\end{equation}
is a parametric YB map.
\end{proposition}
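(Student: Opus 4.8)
The plan is to observe that this proposition is the multiplicative Abelian-group analogue of Proposition \ref{prQuasi} (and of the additive loop statement leading to (\ref{pYBLoop})), and to prove it by running that same argument with the appropriate quasigroup structure. First I would fix the relevant quasigroup: on the Abelian group $(L,\cdot)$ set $a * b := b\cdot a^{-1}$. Then $u*v=w \iff v = w\cdot u$, so $(L,*)$ is a quasigroup with left identity $e_l=e$ and left division $a\setminus b = a\cdot b = b\cdot a$. I would then check the two algebraic inputs that the proof of Proposition \ref{prQuasi} rests on: (a) the identity $R_{\alpha,\beta}(x,y) = T_{\alpha,\beta}(e,x,x*y)$ with $T_{\alpha,\beta}$ as in (\ref{Tmap}) — substituting $x*y = y\cdot x^{-1}$ and $a\setminus b = a\cdot b$ into (\ref{Tmap}) reproduces exactly (\ref{pYBq2ab}); and (b) that the symmetry hypothesis (\ref{symQuasiab}), $\lambda\cdot\mu_{\alpha,\beta}(a,b,c) = \mu_{\alpha,\beta}(a\lambda^{-1},\lambda b,c\lambda^{-1})$, is precisely the relation between $\mu_{\alpha,\beta}$ and the action $\phi(\lambda,a)=\lambda\cdot a$ that is used there to commute the dynamical parameter through the maps $\eta_\lambda,\xi_\lambda$.

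With these identifications the proof is the one already written out for Propositions \ref{prsh1} and \ref{prQuasi}: introduce $w_1,w_2$ from the initial data on the cube, use the 3D compatibility equations (\ref{3d1}) and (\ref{3d2}) to obtain the two alternative expressions for $w_3$ and $w_4$, and then verify, using (\ref{Tmap}) together with the symmetry condition (\ref{symQuasiab}), that the six edge maps $T$ around the two halves of the cube of Fig.\ref{CYB} agree on the common face, i.e. $x'' = w_4\setminus((x*y)*z) = \tilde{\tilde{x}}$, $y'' = w_3\setminus w_4 = \tilde{\tilde{y}}$, $z'' = w_3 = \tilde{\tilde{z}}$. This yields $R_{23}R_{13}R_{12}=R_{12}R_{13}R_{23}$, and since the parameters $\alpha,\beta$ are carried along unchanged by $R_{\alpha,\beta}$ by construction, (\ref{pYBq2ab}) is a parametric YB map.

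I do not expect a genuine obstacle here; the only points needing care are bookkeeping ones: checking that commutativity of $(L,\cdot)$ is used consistently when rewriting $\lambda\cdot\mu_{\alpha,\beta}(a,b,c)$ (so that the role of $\frac{a}{\lambda}$ in (\ref{symQuasi}) is played by $a\lambda^{-1}$ on the correct side in (\ref{symQuasiab})), and confirming that $(L,*)$ with $a*b = b a^{-1}$ really has $e$ as left identity and $a\cdot b$ as left division. One may also note that the additive version — the loop $(\mathbb{C},*)$ with $a*b=b-a$, the condition (\ref{symLoop}), and the map (\ref{pYBLoop}) — is the image of this proposition under $\cdot \leftrightarrow +$, $e\leftrightarrow 0$, $a^{-1}\leftrightarrow -a$, so the two displayed results preceding the statement are simultaneously subsumed by it.
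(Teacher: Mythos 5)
Your proposal is correct and follows essentially the same route as the paper: define $a*b=b\cdot a^{-1}$ on the Abelian group, note that $(L,*)$ is a quasigroup with left identity $e$ and left division $a\setminus b=a\cdot b$, identify $R_{\alpha,\beta}(x,y)=T_{\alpha,\beta}(e,x,x*y)$ with $T_{\alpha,\beta}$ as in (\ref{Tmap}), and then repeat the cube argument of Propositions \ref{prsh1} and \ref{prQuasi} using the symmetry condition (\ref{symQuasiab}) and commutativity. No gaps; the concluding observation that the $(\mathbb{C},*)$, $a*b=b-a$ case is the additive instance of this statement is also consistent with the paper.
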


\begin{proof}
On any group $(L,\cdot )$ we can define the binary operation $\ast:L
\times L \mapsto L$, $a \ast b=b \cdot a^{-1}$. It is easy to verify
that $(L,\ast)$ is a qausigroup with left identity element the
identity element $e$ of the group $(L,\cdot )$ and left division
$a\setminus b =a \cdot b$. The proof follows as in Prop.
\ref{prQuasi} for
\begin{equation*}
T_{\alpha,\beta}:(a, b, c) \mapsto (\mu_{\alpha,\beta}(a,b,c) \cdot
c,a \cdot \mu_{\alpha,\beta}(a,b,c)) \ \text{and} \
R_{\alpha,\beta}(x,y)=T_{\alpha,\beta}(e,x,x\ast y),
\end{equation*}
using the symmetry condition (\ref{symQuasiab}) and the fact that
$(L,\cdot )$ is Abelian.
\end{proof}

We have to notice that the previous propositions are inspired by the
symmetry method used in \cite{pstv2} in order to obtain YB maps out
of 3D-compatible quad-graph equations. We provide several examples
in the next section.

\subsection{YB maps from the ABS classification list}
We consider 3D compatible ternary systems from two integrable
equations on quad-graphs of the ABS classification list \cite{ABS1}.
In each case, the corresponding quasigroup structure that we
consider is chosen according to the symmetry condition that they
satisfy.

\subsubsection{YB maps from the $Q_1$ equation}
The equation
$$\alpha(a-w)(b-c)-\beta(a-b)(w-c)=0, \ \ a,b,c,w,\alpha,\beta \in \mathbb{C}$$
(the equation $Q_1$ of the $Q$ list of Adler, Bobenko, Suris
\cite{ABS1}) is 3D consistent. By solving it with respect to $w$, we
define the 3D compatible ternary system on $(\mathbb{C},+)$
\begin{equation} \label{Q1tern}
w=\mu_{\alpha,\beta}(a,b,c)=\frac{\alpha a(b-c)+\beta c(a-b)}{\alpha
(b-c)+\beta (a-b)},
\end{equation}
which is homogeneous, i.e. $\mu_{\alpha,\beta}(\lambda + a,\lambda+
b,\lambda+ c)= \lambda+ \mu_{\alpha,\beta}(a,b,c)$. The
corresponding YB map (\ref{vshib}) is
$$R_{\alpha,\beta}(x,y)=(\frac{\alpha y(x+y)}{\beta x+\alpha
y},\frac{\beta x(x+y)}{\beta x+\alpha y}).$$

Furthermore, equation (\ref{Q1tern}) defines a homogeneous 3D
compatible ternary system on $(\mathbb{C}^{\ast},\cdot)$, since
$\mu_{\alpha,\beta}(\lambda  a,\lambda b,\lambda c)= \lambda
\mu_{\alpha,\beta}(a,b,c)$. In this case the corresponding YB map of
Prop. \ref{prsh1} is
\begin{align*}
R_{\alpha,\beta}(x,y)&=(\frac{x y}{\mu_{\alpha,\beta}(1,x,x
y)},\mu_{\alpha,\beta}(1,x,x y)) \\
&= (y \frac{\alpha x y +(\beta-\alpha)x-\beta}{\beta x y
+(\alpha-\beta)y-\alpha},x \frac{\beta x y
+(\alpha-\beta)y-\alpha}{\alpha x y +(\beta-\alpha)x-\beta}),
\end{align*} that is the $H_{II}$ YB map of the $H$ list of
\cite{pstv}.

\subsubsection{YB maps from the dKdV}

The discrete Korteweg-de-Vries equation
$$(c-a)(b-w)=\alpha-\beta, \ \ a,b,c,w,\alpha,\beta \in \mathbb{C}$$
is a 3D consistent equation. If we solve it with respect to $w$, we
define the ternary operation $\mu_{\alpha,\beta}$ by
$$w=\mu_{\alpha,\beta}(a,b,c)=b-\frac{\alpha-\beta}{c-a},$$
that satisfies the equations (\ref{3d1}) and (\ref{3d2}). Also,
$\mu_{\alpha,\beta}(\lambda + a,\lambda+ b,\lambda+ c)= \lambda+
\mu_{\alpha,\beta}(a,b,c)$, The corresponding YB map of Prop.
\ref{prsh1} is
\begin{align*}
R_{\alpha,\beta}(x,y) =
((x+y-\mu_{\alpha,\beta}(0,x,x+y)),\mu_{\alpha,\beta}(0,x,x+y)) =
(y+ \frac{\alpha-\beta }{x+y}, x- \frac{\alpha-\beta }{x+y}),
\end{align*}
i.e. the Adler's map.

Furthermore, $\lambda
\mu_{\alpha,\beta}(a,b,c)=\mu_{\alpha,\beta}(\frac{a}{\lambda},\lambda
b,\frac{c}{\lambda})$. By considering the 3D compatible ternary
system $(L,\mu_{\alpha,\beta})$ on the quasigroup $(L,*)$ of Prop.
\ref{prQuasi}, we derive the parametric YB map
$$R_{\alpha,\beta}(x,y)=(\frac{y}{x}\mu_{\alpha,\beta}(1,x,\frac{y}{x}),\mu_{\alpha,\beta}(1,x,\frac{y}{x}))=
(y(1+\frac{\alpha-\beta}{x-y}),x(1+\frac{\alpha-\beta}{x-y})),
$$
which is the $F_{IV}$ map of the $F$ list of the classification in
\cite{ABS2}.

Finally, $\mu_{\alpha,\beta}$ satisfies also the symmetry condition
(\ref{symLoop}). If we consider the 3D compatible ternary system
$(\mathbb{C},\mu_{\alpha,\beta})$ on the Loop $(\mathbb{C},*)$, with
$a \ast b=b-a$, then from (\ref{pYBLoop}) we derive the YB map
$$R_{\alpha,\beta}(x,y) =
(y+ \frac{\alpha-\beta }{x-y}, x+ \frac{\alpha-\beta }{x-y}),
$$
that is the $F_{V}$ map of the $F$ list of the classification in
\cite{ABS2}.

\section{From YB maps to 3D Consistent quad-graph equations} \label{secappl}

In this section we study the inverse problem, that is to derive 3D
compatible ternary systems from YB maps. For dynamical YB maps on
quasigroups satisfying an invariance condition, the answer is given
by Shibukawa in \cite{shib2}.

\begin{theorem} $($Shibukawa$)$ \   \label{thshib2}
Let $ R_\lambda(x,y)=(\eta_\lambda(y)(x),\xi_\lambda(x)(y)) $ be a
dynamical YB map on a left quasigroup $(L, \cdot)$, satisfying the
invariance condition
\begin{equation} \label{invshibu}
(\lambda \xi_\lambda(x)(y))\eta_\lambda(y)(x)=(\lambda x)y,
\end{equation}
for every $\lambda,x,y \in L$. Then the ternary operation $\mu$ on
$L$ defined by
\begin{equation} \label{ternshibu}
\mu(a,b,c)=a \xi_a(a\setminus b)(b\setminus c)
\end{equation}
is 3D-compatible (i.e. satisfies  equations (\ref{mmm})).
\end{theorem}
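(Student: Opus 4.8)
The plan is to verify equations (\ref{mmm}) directly by substituting the explicit formula (\ref{ternshibu}) and reducing everything, via the invariance condition (\ref{invshibu}) and the definition of left division, to the dynamical YB equation (\ref{dYBeq}) that $R_\lambda$ is assumed to satisfy. The main idea is that $\mu(a,b,c)=a\,\xi_a(a\setminus b)(b\setminus c)$ essentially records the $\xi$-component of $R_\lambda$ with $\lambda=a$ applied to the pair $(a\setminus b, b\setminus c)$; so reading equations (\ref{3d1})--(\ref{3d2}) (with all parameters suppressed) off the ``consistency cube'' of Fig.~\ref{figcube2} should match, term by term, the two sides of the dynamical YB equation read off the ``cubic representation'' of Fig.~\ref{CYB}. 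First I would set $\lambda=a$, put $x=a\setminus b$, $y=b\setminus c$ (so that $ax=b$, $(ax)y=c$ using the left-quasigroup laws and, crucially, the invariance condition to keep the first component under control), and record that $\mu(a,b,c)=a\,\xi_a(x)(y)$ and, by (\ref{invshibu}), $a\,\xi_a(x)(y)=c/\big(\eta_a(y)(x)\big)$ in the left-division sense, i.e. $\eta_a(y)(x)=\mu(a,b,c)\setminus c$.

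Next I would track a ``column vector'' $(a,b,c,d)$ through the cube. The key computation is to express each of the five composite quantities $\mu(\mu(a,b,c),c,d)$, $\mu(b,c,d)$, $\mu(a,b,\mu(b,c,d))$, $\mu(a,\mu(a,b,c),\mu(\mu(a,b,c),c,d))$ and $\mu(\mu(a,b,\mu(b,c,d)),\mu(b,c,d),d)$ in terms of the components of $R_\lambda$ for the relevant dynamical parameters $\lambda\in\{a,b,\mu(a,b,c)\}$, using repeatedly the two relations $\mu(p,q,r)=p\,\xi_p(p\setminus q)(q\setminus r)$ and $\eta_p(q\setminus r)(p\setminus q)=\mu(p,q,r)\setminus r$. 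Concretely I would introduce the triple $(x,y,z)=(a\setminus b,\,b\setminus c,\,c\setminus d)$ and check that applying $R_{13}R_{23}$-type or $R_{13}R_{12}$-type compositions to $(x,y,z)$ reproduces, after multiplying back by the appropriate left factor, exactly the nested expressions on the two sides of (\ref{mmm}). This is precisely the bookkeeping displayed in the proof of Propositions~\ref{prsh1} and \ref{prQuasi}, read in reverse: there one went from a 3D compatible ternary system to a YB map, and here one goes from the (dynamical) YB map back to the ternary identities.

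Then the heart of the argument is the dynamical YB equation itself. After the substitution the left side of (\ref{mmm})'s first line becomes one of the two ``routes'' on the cube of Fig.~\ref{CYB} and the right side the other route, and equality follows from (\ref{dYBeq}) applied with dynamical parameter $\lambda=a$, where one must carefully match the shifted parameters $\phi(\lambda,X^{(i)})=\lambda X^{(i)}$ with the quasigroup products $b=ax$, etc. The second line of (\ref{mmm}) comes out the same way, reading the $w_4$-vertex of the cube by its two determinations; here the invariance condition (\ref{invshibu}) is what guarantees that the first (the $\eta$-) components computed along the two routes also agree, not merely the $\xi$-components, so that the nested $\mu$'s are literally equal and not just equal ``up to a left factor.''

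I expect the main obstacle to be the bookkeeping of left divisions: since $L$ is only a \emph{left} quasigroup, $\setminus$ is not a group inverse, so one cannot freely cancel, and every step must be justified by the defining property $u\setminus w=v\iff uv=w$ together with (\ref{invshibu}). In particular, verifying that the first components match along the two routes of the cube — equivalently, that the $\eta$-data are consistent — is the delicate point, and it is exactly where the invariance hypothesis (\ref{invshibu}) is indispensable; without it one only gets the $\xi$-half of the statement. A secondary care point is making sure the three dynamical parameters appearing ($a$, $b=ax$, and $\mu(a,b,c)$) are the correct ones at each vertex, which amounts to re-deriving the identifications $w_1=\mu(a,b,c)$, $w_2=\mu(b,c,d)$, $w_3$, $w_4$ of the cube in the present notation before invoking (\ref{dYBeq}).
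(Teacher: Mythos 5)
Your strategy is sound, but note that the paper itself gives no proof of Theorem \ref{thshib2} to compare with: it is quoted from Shibukawa \cite{shib2}, and the only related computation in the paper (the proof of Propositions \ref{prsh1} and \ref{prQuasi}) runs in the opposite direction, exactly as you observe. Your outline does go through: setting $\lambda=a$ and $(x,y,z)=(a\setminus b,\,b\setminus c,\,c\setminus d)$, the invariance condition (\ref{invshibu}) together with (\ref{ternshibu}) gives $\eta_a(y)(x)=\mu(a,b,c)\setminus c$, and tracking the two compositions in (\ref{dYBeq}) one finds that on the left-hand route the final third component $z_2$ satisfies $a z_2=\mu(a,\mu(a,b,c),\mu(\mu(a,b,c),c,d))$ and the final second component $y_2$ satisfies $(a z_2)y_2=\mu(\mu(a,b,c),c,d)$, while on the right-hand route the corresponding components satisfy $a\tilde{\tilde z}=\mu(a,b,\mu(b,c,d))$ and $(a\tilde{\tilde z})\tilde{\tilde y}=\mu(\mu(a,b,\mu(b,c,d)),\mu(b,c,d),d)$; equating the two triples, which is precisely the hypothesis that $R_\lambda$ satisfies (\ref{dYBeq}), yields both identities of (\ref{mmm}). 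Two small corrections to your bookkeeping: the agreement of the components along the two routes is the content of the dynamical YB equation itself, not of the invariance condition --- what (\ref{invshibu}) provides is that each $\eta$-output equals the left division of the relevant product by the corresponding $\mu$-value, and this is what identifies the shifted parameters $\phi(\lambda,X^{(i)})=\lambda X^{(i)}$ and the intermediate arguments with the nested $\mu$'s (e.g. $\mu(a,b,c)\setminus c$ and $\mu(b,c,d)\setminus d$); without it the translation indeed collapses at the second application of $R$, as you anticipate. Also, the dynamical parameters occurring are not only $a$, $b=ax$ and $\mu(a,b,c)$ but also $\mu(a,b,\mu(b,c,d))$, which enters as $\phi(\lambda,X^{(3)})$ on the second route.
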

Since every YB map is a dynamical YB map, we can use this theorem in
order to derive 3D compatible ternary systems from YB maps
satisfying a corresponding invariance condition. It is easy to
verify that all the parametric YB maps that we presented in the
previous section satisfy condition (\ref{invshibu}). In particular,
if we denote by $R_{\alpha,\beta}:(x,y)\mapsto(u(x,y),v(x,y))$ each
map on the corresponding quasigroup $(L,\ast)$, then the condition
(\ref{invshibu}) becomes
\begin{equation} \label{invCo}
v(x,y) \ast u(x,y)=x \ast y.
\end{equation}

\subsection{Multiparametric YB maps and consistency} \label{application}
Next, as application of the last construction, we present first a
four parametric YB map obtained by reduction through a compatible
constraint of a higher dimensional map introduced in \cite{kp1} and
then apply theorem \ref{thshib2} to obtain a four parametric 3D
compatible quad-graph equation. We will use the next lemma.

\begin{lemma} \label{lema}
Let $ R_{\alpha,\beta}:X^2\times X^2 \rightarrow X^2 \times X^2$ be
a parametric YB map with
$$R_{\alpha,\beta}(x_1,x_2,y_1,y_2)=(u_1(x_1,x_2,y_1,y_2),u_2(x_1,x_2,y_1,y_2),v_1(x_1,x_2,y_1,y_2),v_2(x_1,x_2,y_1,y_2))$$
(here the functions $u_i, \ v_i$, for $i=1,2$, depend on the
parameters $\alpha$, $\beta$) and $f_{\alpha}:X \rightarrow X$ a
parameter depending function, such that
\begin{align}
u_2(x_1,f_{\alpha}(x_1),y_1,f_{\beta}(y_1)) &=
f_{\alpha}(u_1(x_1,f_{\alpha}(x_1),y_1,f_{\beta}(y_1))), \label{u2f} \\
v_2(x_1,f_{\alpha}(x_1),y_1,f_{\beta}(y_1)) &=
f_{\beta}(v_1(x_1,f_{\alpha}(x_1),y_1,f_{\beta}(y_1))), \label{u1f}
\end{align}
then the map $\tilde{R}_{\alpha,\beta}:X\times X \rightarrow X
\times X$, defined by
$$\tilde{R}_{\alpha,\beta}(x,y)=(u_1(x,f_{\alpha}(x),y,f_{\beta}(y)),v_1(x,f_{\alpha}(x),y,f_{\beta}(y)))$$
is a parametric YB map. Furthermore, if $L(x_1,x_2,\alpha)$ is a Lax
matrix of $R_{\alpha,\beta}$, then
$\tilde{L}(x,\alpha)=L(x,f_{\alpha}(x),\alpha)$ is a Lax matrix of
$\tilde{R}_{\alpha,\beta}$.
\end{lemma}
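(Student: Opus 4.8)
The plan is to recognize $\tilde R_{\alpha,\beta}$ as the restriction of $R_{\alpha,\beta}$ to an invariant \emph{constraint graph} and then transport the Yang--Baxter relation from $R$ to $\tilde R$ along that restriction. For $\alpha\in I$ set $j_\alpha:X\to X\times X$, $j_\alpha(x)=(x,f_\alpha(x))$, and let $p:X\times X\to X$ be the projection onto the first factor, so that $p\circ j_\alpha=\mathrm{id}_X$; write $j_{\alpha,\beta}:=j_\alpha\times j_\beta:X^2\to (X^2)^2$. Conditions (\ref{u2f}) and (\ref{u1f}) say precisely that if the input to $R_{\alpha,\beta}$ lies on the graph, i.e.\ is of the form $(x_1,f_\alpha(x_1),y_1,f_\beta(y_1))$, then the output $(u_1,u_2,v_1,v_2)$ again satisfies $u_2=f_\alpha(u_1)$ and $v_2=f_\beta(v_1)$. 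Since $\tilde R_{\alpha,\beta}(x,y)=(u_1,v_1)$ evaluated at the constrained point, this is the intertwining identity
\[
R_{\alpha,\beta}\circ j_{\alpha,\beta}=j_{\alpha,\beta}\circ \tilde R_{\alpha,\beta},
\]
and in particular the image of $j_{\alpha,\beta}$ is $R_{\alpha,\beta}$-invariant. As $j_{\alpha,\beta}$ is injective with left inverse $p\times p$, the map $\tilde R_{\alpha,\beta}$ is simply $R_{\alpha,\beta}$ restricted to this invariant set and read in the coordinate $p$.

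Next I lift this to the triple product. For a parameter triple $\alpha,\beta,\gamma$ put $J:=j_\alpha\times j_\beta\times j_\gamma:X^3\to (X^2)^3$, again injective with a one-sided inverse. Because (\ref{u2f}) and (\ref{u1f}) hold for every pair of parameters, the intertwining identity applied on the relevant two tensor factors gives
\[
R^{12}_{\alpha,\beta}\circ J=J\circ \tilde R^{12}_{\alpha,\beta},\qquad
R^{13}_{\alpha,\gamma}\circ J=J\circ \tilde R^{13}_{\alpha,\gamma},\qquad
R^{23}_{\beta,\gamma}\circ J=J\circ \tilde R^{23}_{\beta,\gamma},
\]
since each $R^{ij}$ preserves the constraint on the two factors it acts on and leaves the remaining (already constrained) factor untouched. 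Composing three such identities along each side of the Yang--Baxter relation $R_{23}R_{13}R_{12}=R_{12}R_{13}R_{23}$ for $R$ yields
\[
\begin{aligned}
J\circ\bigl(\tilde R^{23}_{\beta,\gamma}\tilde R^{13}_{\alpha,\gamma}\tilde R^{12}_{\alpha,\beta}\bigr)
&=\bigl(R^{23}_{\beta,\gamma}R^{13}_{\alpha,\gamma}R^{12}_{\alpha,\beta}\bigr)\circ J \\
&=\bigl(R^{12}_{\alpha,\beta}R^{13}_{\alpha,\gamma}R^{23}_{\beta,\gamma}\bigr)\circ J
=J\circ\bigl(\tilde R^{12}_{\alpha,\beta}\tilde R^{13}_{\alpha,\gamma}\tilde R^{23}_{\beta,\gamma}\bigr).
\end{aligned}
\]
Cancelling the injective $J$ on the left gives the Yang--Baxter relation for $\tilde R$; the parameters are carried through exactly as for $R$, so $\tilde R_{\alpha,\beta}$ is a parametric YB map.

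For the Lax matrix statement, write $(\tilde u,\tilde v)=\tilde R_{\alpha,\beta}(x,y)$, so that on the constraint surface $u_1=\tilde u$, $v_1=\tilde v$, $u_2=f_\alpha(\tilde u)$ and $v_2=f_\beta(\tilde v)$ by (\ref{u2f}) and (\ref{u1f}). Hence $\tilde L(\tilde u;\alpha)=L(\tilde u,f_\alpha(\tilde u);\alpha)=L(u_1,u_2;\alpha)$ and $\tilde L(\tilde v;\beta)=L(v_1,v_2;\beta)$, evaluated at the constrained arguments, while $\tilde L(x;\alpha)=L(x,f_\alpha(x);\alpha)$ and $\tilde L(y;\beta)=L(y,f_\beta(y);\beta)$. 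Substituting these into the defining identity (\ref{laxmat}) for the Lax matrix $L$ of $R_{\alpha,\beta}$, specialized to $x_2=f_\alpha(x)$ and $y_2=f_\beta(y)$, produces $\tilde L(\tilde u;\alpha)\tilde L(\tilde v;\beta)=\tilde L(y;\beta)\tilde L(x;\alpha)$, i.e.\ $\tilde L$ is a Lax matrix of $\tilde R_{\alpha,\beta}$.

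The only delicate point is the bookkeeping: one must keep track of which parameter sits on which of the three tensor factors so that each $R^{ij}$ acts with the correct parameter pair, and check that, with this assignment, each $R^{ij}$ really preserves the full constraint set $\{\,x_2=f_\alpha(x_1),\ y_2=f_\beta(y_1),\ z_2=f_\gamma(z_1)\,\}$ — this is exactly where (\ref{u2f}) and (\ref{u1f}), applied to the appropriate parameter pair, are used. Once this invariance is established, the transfer of both the Yang--Baxter and the Lax identities is automatic, since $J$ admits a one-sided inverse.
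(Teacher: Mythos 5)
Your proof is correct and is essentially the paper's own argument: conditions (\ref{u2f})--(\ref{u1f}) make the constraint set $\{x_2=f_{\alpha}(x_1),\,y_2=f_{\beta}(y_1)\}$ invariant, which is exactly the intertwining identity $R_{\alpha,\beta}\circ j_{\alpha,\beta}=j_{\alpha,\beta}\circ\tilde{R}_{\alpha,\beta}$ that the paper verifies coordinate by coordinate (identifying $x',x'',\dots$ with $\bar{x}_1,\bar{\bar{x}}_1,\dots$) before reading the Yang--Baxter and Lax relations for $\tilde{R}_{\alpha,\beta}$ off those for $R_{\alpha,\beta}$ on the constrained triple product. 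Your formulation via the injective embedding $J$ and cancellation, and the same substitution argument for the Lax matrix, is just a cleaner packaging of that computation.
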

Direct computations prove this lemma (appendix). Lemma \ref{lema}
can be generalized on higher dimensional YB maps $
R_{\alpha,\beta}:X^n\times X^n \rightarrow X^n \times X^n $,
$$R_{\alpha,\beta}:(x_1,...,x_n,y_1,...,y_n) \mapsto
(u_1,...,u_n,v_1,...,v_n),$$  with a compatible parametric function
$f_{\alpha}:X^{n-1}\rightarrow X$, such that if we replace $x_k$,
$y_k$ by $f_{\alpha}(x_1,...,x_{k-1},x_{k+1},...x_n)$,
$f_{\beta}(y_1,...,y_{k-1},y_{k+1},...y_n)$ for $k=1,...n,$
respectively, then $u_k \mapsto
f_{\alpha}(u_1,...,u_{k-1},u_{k+1},...,u_n)$ and $v_k \mapsto
f_{\beta}(v_1,...,v_{k-1},v_{k+1},...,v_n)$. In this way the new
$u_i, \ v_i$, $i=1,...,n, \ i\neq k$, give rise to a YB map on
$X^{n-1}\times X^{n-1}$.

Now, we consider the YB map
$$\mathcal{R}_{\alpha,\beta}:((x_1,x_2),(y_1,y_2)) \mapsto
((U_{11},U_{12}),(V_{11},V_{12})),$$ where $U_{ij}, \ V_{ij}$ denote
the corresponding $ij$ elements of the matrices
\begin{align*}
U &= (
\bar{L}(y_1,y_2;\beta)\bar{L}(x_1,x_2;\alpha)+\frac{1}{\alpha_1
\alpha_2} K_{\alpha}K_{\beta})
(\bar{L}(y_1,y_2;\beta)K_{\alpha}+K_{\beta}
\bar{L}(x_1,x_2;\alpha))^{-1}K_{\alpha} , \\ V &=
K_{\alpha}^{-1}(\bar{L}(y_1,y_2;\beta)K_{\alpha}+K_{\beta}\bar{L}(x_1,x_2;\alpha)-UK_{\beta}),
\ \ \text{for} \ \alpha=(\alpha_1,\alpha_2),\
\beta=(\beta_1,\beta_2), \\
K_{\alpha}&= \begin{pmatrix}
\alpha_{1} & 0 \\
0 & \alpha_2
\end{pmatrix}
 \ \ \text{and} \ \
\bar{L}(x_{1},x_{2};\alpha)=%
\left(
\begin{array}{cc}
 {x_1} & {x_2} \\
 \frac{ \alpha_1-{\alpha_2} {x_1}^2}{{\alpha_1} {x_2}} & -\frac{\alpha_2
   {x_1}}{{\alpha_1}}
\end{array}
\right).
\end{align*}
This map is ``\emph{Case I}" quadrirational symplectic YB map of the
classification of binomial Lax matrices presented in \cite{kp1}, for
$a_3=-1$ and $a_4=0$, with strong Lax matrix
$\mathcal{L}(x_1,x_2,\alpha)=\bar{L}(x_{1},x_{2};\alpha)-\zeta
K_{\alpha}$ and Poisson bracket
$$
\{x_1,x_2\}=-\alpha_1 x_2, \  \{y_1,y_2\}=-\beta_1 y_2, \
\{x_i,y_j\}=0 \ \text{for} \ i=1,2, \ (\alpha_1,\beta_1,x_2,y_2 \neq
0).$$ By setting $x_1=y_1=0$ we derive $U_{11}=V_{11}=0$, and the
resulting from lemma \ref{lema} map
\begin{equation} \label{YBmys}
{R}_{\alpha,\beta}(x,y)=
%\mathcal{R}_{\alpha,\beta}((0,x),(0,y))=
 (y \frac{\beta_1 x+ \alpha_2 y}{\alpha_1 x+\beta_2 y},x \frac{\beta_1
x+ \alpha_2 y}{\alpha_1 x+\beta_2 y}),
\end{equation}
is a parametric YB map with strong Lax matrix
\begin{equation} \label{laxmys}
{L}(x;\alpha)=\mathcal{L}(0,x;\alpha)=\left(
\begin{array}{cc}
 -\alpha_1 \zeta & {x} \\
 \frac{1} {x} &  -\alpha_2 \zeta
\end{array}
\right).
\end{equation}
Furthermore, this YB map satisfies the invariance condition
\begin{equation} \label{invmys}
\frac{u(x,y)}{v(x,y)}=\frac{y}{x}, \ \text{for} \ u(x,y)=y
\frac{\beta_1 x+ \alpha_2 y}{\alpha_1 x+\beta_2 y} \ \text{and} \
v(x,y)=x \frac{\beta_1 x+ \alpha_2 y}{\alpha_1 x+\beta_2 y}.
\end{equation}
If we consider the YB map (\ref{YBmys}) as a birational map on the
quasigroup $L=\mathbb{C}\setminus \{0 \}$ with the binary operation
$a\ast b= \frac{b}{a}$, then the invariance condition (\ref{invmys})
corresponds to (\ref{invshibu}) and by theorem \ref{thshib2} we
obtain the corresponding 3D compatible ternary system on $(L,\ast)$
with
\begin{equation} \label{3dmyst}
\mu_{\alpha,\beta}(a,b,c)=b \frac{\beta_1 a+ \alpha_2 c}{\alpha_1
a+\beta_2 c}, \ \ \alpha=(\alpha_1,\alpha_2), \
\beta=(\beta_1,\beta_2),
\end{equation}
 or
equivalently,  by setting $w=\mu_{\alpha,\beta}(a,b,c)$,
\begin{equation}   \label{4paramqgeq}
w(\alpha_1 a+\beta_2 c)-b(\beta_1 a+ \alpha_2 c)=0.
\end{equation}

From the other hand, this 3D compatible ternary system on $(L,\ast)$
satisfies the symmetry condition (\ref{symQuasi}), and the
corresponding  YB map of Prop. \ref{prQuasi}
$$
R_{\alpha,\beta}(x,y)=(\frac{y}{x}\mu_{\alpha,\beta}(1,x,\frac{y}{x}),\mu_{\alpha,\beta}(1,x,\frac{y}{x})),
$$
coincides with the YB map (\ref{YBmys}).

We can reduce the number of parameters of the YB map (\ref{YBmys})
by setting $\alpha_1=\beta_1=c$ or $\alpha_2=\beta_2=c$. with $c$
constant. Another interesting reduction of parameters is derived by
setting $\alpha_1=\alpha-r, \ \alpha_2=\alpha+r, \ \beta_1=\beta-r,
\ \beta_2=\beta+r$, with $r$ constant. Then the  YB map
(\ref{YBmys}) is transformed to
\begin{equation} \label{YBmys2}
\bar{R}_{\alpha,\beta}(x,y)=(y
\frac{(\beta-r)x+(\alpha+r)y}{(\alpha-r)x+(\beta+r)y},x\frac{(\beta-r)x+(\alpha+r)y}{(\alpha-r)x+(\beta+r)y}),
\end{equation}
with strong Lax matrix $L(x,\alpha-r,\alpha+r)$. Here $r$ is not a
YB parameter but just a free parameter. The corresponding 3D
compatible ternary system of theorem \ref{thshib2}, $$w((\alpha-r)
a+(\beta+r) c)-b((\beta-r) a+ (\alpha+r) c)=0,$$ has been introduced
in \cite{NQC} and is a ``homotopy'' of discrete MKdV and Toda
equations.

The YB map (\ref{YBmys}) (as well as any map derived by reducing the
parameters described above) is quadrirational and belongs to the
subclass $[2:2]$ of \cite{ABS2}, nevertheless it is not an
involution since $R_{\alpha,\beta}\circ R_{\alpha,\beta} \neq id$.
Therefore, it is not equivalent with any map from the $F$ and $H$
families in \cite{pstv}. We notice that the induced 3D compatible
system (\ref{3dmyst}) is not $D_4$-symmetric. However, it is
homogeneous with respect to multiplication and generates another one
four parametric YB map (involution this time) according to
proposition \ref{prsh1},
$$R_{\alpha,\beta}(x,y)=(\frac{x y}{\mu_{\alpha,\beta}(1,x,x y)},\mu_{\alpha,\beta}(1,x,x
y))= (y \frac{\alpha_1  +\beta_2 xy}{\beta_1  +\alpha_2 xy},x
\frac{\beta_1  +\alpha_2 xy}{\alpha_1  +\beta_2 xy}),$$ that
satisfies the invariant condition (\ref{invCo}).

\subsection{3D Consistency on $GL_2(\mathbb{C})$}

We conclude this section by applying theorem \ref{thshib2} on the
general YB map presented in \cite{kp1}. Let $X, \ Y $ be two generic
elements of $GL_2(\mathbb{C})$ and $K: \mathbb{C}^d \rightarrow
GL_2(\mathbb{C})$ a $d$-parametric family of commuting matrices. It
is enough to consider as $K$ one of the two cases of the
classification in \cite{kp1} (three cases in $GL_2(\mathbb{R})$
respectively). Then the map
$$\mathcal{R}_{\alpha,\beta}(X,Y)=(U_{\alpha,\beta}(X,Y),V_{\alpha,\beta}(X,Y)), \ \ \text{where}$$
$U_{\alpha,\beta}(X,Y)=
(f_{2}^{\alpha}(X)YX-f_{0}^{\alpha}(X)K_{\alpha}K_{\beta})
(f_{2}^{\alpha}(X)(Y
K_{\alpha}+K_{\beta}X)-f_{1}^{\alpha}(X)K_{\alpha}K_{\beta})^{-1}K_{\alpha},$
 \\
$V_{\alpha,\beta}(X,Y) = K_{\alpha}^{-1}(YK_{\alpha}+K_{\beta}X-U_{\alpha,\beta}(X,Y)K_{\beta})$ \\ \\
and $f_i^{\alpha}$, $i=0,1,2$ are defined by the coefficients of the
characteristic polynomial $$\det(X-\zeta K_{\alpha})=
f_{2}^{\alpha}(X) \zeta^{2}-f_{1}^{\alpha}(X) \zeta +
f_{0}^{\alpha}(X),$$ is a quadrirational parametric YB map that
satisfies the invariant conditions
\begin{eqnarray*}
(U_{\alpha,\beta}(X,Y)-\zeta K_{\alpha})(V_{\alpha,\beta}(X,Y)-\zeta
K_{\beta})=(Y-\zeta K_{\beta})(X-\zeta K_{\alpha}), \\
\ \ \ \ \ \ \ \ f_i^{\alpha}(U_{\alpha,\beta}(X,Y))=f_i^{\alpha}(X),
\ f_i^{\beta}(V_{\alpha,\beta}(X,Y))=f_i^{\beta}(Y),
\end{eqnarray*}
for $i=0,1,2$.

\begin{proposition}
The ternary operation
\begin{equation} \label{ternGL}
\mu_{\alpha,\beta}(A,B,C)=V_{\alpha,\beta}(BA^{-1},CB^{-1})A
\end{equation}
defines a 3D compatible ternary system on $GL_2(\mathbb{C})$.
\end{proposition}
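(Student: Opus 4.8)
The plan is to obtain (\ref{ternGL}) as a direct application of Theorem \ref{thshib2}, exactly as the ternary system (\ref{3dmyst}) on $\mathbb{C}^{\ast}$ was obtained above. The point is to equip $GL_2(\mathbb{C})$ with the left-quasigroup structure for which $\mathcal{R}_{\alpha,\beta}$ satisfies the invariance condition (\ref{invshibu}) and for which the induced ternary operation has the stated closed form. The natural choice is the \emph{opposite group}: $L=GL_2(\mathbb{C})$ with product $X\cdot Y:=YX$. This is a group, hence a left quasigroup, with identity $I$ and left division $X\setminus Y=YX^{-1}$. Since every parametric YB map is in particular a dynamical YB map with trivial $\lambda$-dependence, we may take $R_\lambda=(\eta_\lambda,\xi_\lambda)=\mathcal{R}_{\alpha,\beta}=(U_{\alpha,\beta},V_{\alpha,\beta})$, so that $\eta_\lambda(y)(x)=U_{\alpha,\beta}(x,y)$ and $\xi_\lambda(x)(y)=V_{\alpha,\beta}(x,y)$.

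First I would verify (\ref{invshibu}) for this quasigroup. Evaluating $(\lambda\,\xi_\lambda(x)(y))\,\eta_\lambda(y)(x)$ and $(\lambda x)y$ with the opposite-group product (where the product of $P$ and $Q$ equals $QP$) collapses the condition, after cancelling $\lambda$, to $U_{\alpha,\beta}(X,Y)\,V_{\alpha,\beta}(X,Y)=YX$. But this is precisely the $\zeta=0$ specialisation of the first invariant identity satisfied by $\mathcal{R}_{\alpha,\beta}$, namely $(U_{\alpha,\beta}-\zeta K_\alpha)(V_{\alpha,\beta}-\zeta K_\beta)=(Y-\zeta K_\beta)(X-\zeta K_\alpha)$; equivalently, in the form (\ref{invCo}) with $\ast=\cdot$ the opposite-group operation, $V_{\alpha,\beta}\cdot U_{\alpha,\beta}=X\cdot Y$ again says $U_{\alpha,\beta}V_{\alpha,\beta}=YX$. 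So the hypothesis of Theorem \ref{thshib2} holds. The same identity also shows $\det V_{\alpha,\beta}(BA^{-1},CB^{-1})\neq 0$ for generic $A,B,C$, so the right-hand side of (\ref{ternGL}) indeed lies in $GL_2(\mathbb{C})$.

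Theorem \ref{thshib2} then yields the 3D-compatible ternary operation $\mu(a,b,c)=a\,\xi_a(a\setminus b)(b\setminus c)$. Substituting $a\setminus b=BA^{-1}$, $b\setminus c=CB^{-1}$, the fact that $\xi$ is independent of the dynamical parameter and equals $V_{\alpha,\beta}$, and the opposite-group product $A\cdot W=WA$, we get $\mu_{\alpha,\beta}(A,B,C)=A\cdot V_{\alpha,\beta}(BA^{-1},CB^{-1})=V_{\alpha,\beta}(BA^{-1},CB^{-1})\,A$, which is exactly (\ref{ternGL}). Hence it satisfies (\ref{3d1})--(\ref{3d2}) and defines a 3D compatible ternary system on $GL_2(\mathbb{C})$.

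The only point that requires care is the bookkeeping of the parameters: Theorem \ref{thshib2} as stated concerns a single dynamical YB map, while $\mu_{\alpha,\beta}$ is a family and 3D compatibility (\ref{3d1})--(\ref{3d2}) mixes three values $\alpha,\beta,\gamma$. I would handle this as in the $\mathbb{C}^{\ast}$ example: view $\mathcal{R}_{\alpha,\beta}$ as an honest YB map on $GL_2(\mathbb{C})\times I$ via (\ref{pYB}), with each parameter carried along its strand, apply Shibukawa's construction there, and read off the distribution of $\alpha,\beta,\gamma$ over the edges of the cube from Fig.~\ref{CYB}; Shibukawa's identities (\ref{mmm}) for the resulting ternary system become precisely (\ref{3d1})--(\ref{3d2}) for $\mu_{\alpha,\beta}$. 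Everything else is a one-line substitution, so I anticipate no real obstacle beyond this bookkeeping (and the usual genericity caveat coming from the birationality of $\mathcal{R}_{\alpha,\beta}$).
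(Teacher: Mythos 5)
Your proposal is correct and follows essentially the same route as the paper: equip $GL_2(\mathbb{C})$ with the opposite product $A\ast B=BA$ (so $A\setminus B=BA^{-1}$), read the invariance condition (\ref{invshibu}) off the $\zeta$-independent part of the identity $(U-\zeta K_\alpha)(V-\zeta K_\beta)=(Y-\zeta K_\beta)(X-\zeta K_\alpha)$, i.e.\ $U_{\alpha,\beta}V_{\alpha,\beta}=YX$, and then apply Theorem \ref{thshib2} with $\xi=V_{\alpha,\beta}$ to get $\mu_{\alpha,\beta}(A,B,C)=A\ast V_{\alpha,\beta}(BA^{-1},CB^{-1})=V_{\alpha,\beta}(BA^{-1},CB^{-1})A$. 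Your extra remarks on parameter bookkeeping and genericity are sensible refinements of the same argument, not a different method.
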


\begin{proof}
On $GL_2(\mathbb{C})$ we define the binary operation $A \ast B=BA$.
So, $A \setminus B=BA^{-1}$. Now from the first invariant condition
of $\mathcal{R}_{\alpha,\beta}$ we have that $$V_{\alpha,\beta}(X,Y)
\ast
U_{\alpha,\beta}(X,Y)=U_{\alpha,\beta}(X,Y)V_{\alpha,\beta}(X,Y)=YX=X
\ast Y,$$ which is equivalent with (\ref{invshibu}). So, by theorem
\ref{thshib2}, we conclude that
$(GL_2(\mathbb{C}),\mu_{\alpha,\beta})$ with
$$
\mu_{\alpha,\beta}(A,B,C)=A\ast V_{\alpha,\beta}(A \setminus B, B
\setminus C)=V_{\alpha,\beta}(BA^{-1},CB^{-1})A
$$
is a 3D compatible ternary system.
\end{proof}
This 3D compatible ternary system can be generalized in
$GL_n(\mathbb{C})$ by considering the corresponding YB maps on
$GL_n(\mathbb{C}) \times GL_n(\mathbb{C})$ from the recursive
formulae presented in \cite{kp1}.

\section{Conclusions} \label{secconcl}
We showed that the conditions of the construction of dynamical
Yang-Baxter maps out of ternary systems introduced by Shibukawa are
equivalent with the 3D consistency property of equations on
quadrilaterals. Moreover, certain symmetry conditions on the 3D
compatible ternary systems drop the dynamical parameter and yield
(plain) YB maps. It is clear that the underlying quasigroup
structure of the evolution space is crucial in this construction and
must be compatible with the symmetry condition of the initial
ternary system. From the other hand, parametric YB maps satisfying
an invariance condition give rise to 3D compatible systems. In this
case the suitable quasigroup structure can be traced from the
invariance condition of the map. Nevertheless, other quasigroup laws
with additional symmetry conditions on 3D consistent equations will
lead to new parametric YB maps with invariance conditions and vice
versa.

Incidentally, we saw in section \ref{application} that YB maps can
be derived by considering some compatible constraints (Lemma
\ref{lema}) on higher dimensional Poisson YB maps. The question that
arises is how to find these constraints and also how to find new
compatible Poisson structures in order to study the integrability of
the corresponding transfer maps on lattices (see for example
\cite{pnc,ves2,kp2}).

Finally, the relation of the presented non-involution YB maps with
the maps classified in \cite{pstv}, as well as the question of
existence and significance of multiparameter extensions of known YB
maps and 3D compatible quad-graph equations, is an interesting issue
that deserves further investigation.

\section*{Acknowledgements}

The authors thank the anonymous referees for their useful comments.
The  research of VGP has been co-financed by the European Union
(European Social Fund - ESF) and Greek national funds through the
Operational Program "Education and Lifelong Learning" of the
National Strategic Reference Framework (NSRF) - Research Funding
Program: THALES - Investing in knowledge society through the
European Social Fund.

\section*{Appendix }
The proof of Lemma \ref{lema}.
\begin{proof}
Suppose that $
 \tilde
R^{23}_{\beta,\gamma} \tilde R^{13}_{\alpha,\gamma} \tilde
R^{12}_{\alpha,\beta}(x,y,z)=(x',y',z'), \  \tilde
R^{12}_{\alpha,\beta} \tilde R^{13}_{\alpha,\gamma} \tilde
R^{23}_{\beta,\gamma}(x,y,z)=(x'',y'',z'')$ and
\begin{eqnarray*}
R^{23}_{\beta,\gamma}
 R^{13}_{\alpha,\gamma}
R^{12}_{\alpha,\beta}(x_1,x_2,y_1,y_2,z_1,z_2)=(\bar{x}_1,\bar{x}_2,\bar{y}_1,\bar{y}_2,\bar{z}_1,\bar{z}_2),
\\  R^{12}_{\alpha,\beta}  R^{13}_{\alpha,\gamma}
R^{23}_{\beta,\gamma}(x_1,x_2,y_1,y_2,z_1,z_2)=(\bar{\bar{x}}_1,\bar{\bar{x}}_2,\bar{\bar{y}}_1,\bar{\bar{y}}_2,\bar{\bar{z}}_1,\bar{\bar{z}}_2).
\end{eqnarray*}
 Then
$x'=u_1(u_1(x,f_{\alpha}(x),y,f_{\beta}(y)),f_{\alpha}(u_1(x,f_{\alpha}(x),y,f_{\beta}(y))),z,f_{\gamma}(z))$,
while
\begin{eqnarray*}
x''=u_1 ( u_1(x,f_{\alpha}(x),v_1(y,f_{\beta}(y),z,f_{\gamma}(z)),f_{\gamma}(v_1(y,f_{\beta}(y),z,f_{\gamma}(z)))), \\
  f_{\alpha} (u_1(x,f_{\alpha}(x),v_1(y,f_{\beta}(y),z,f_{\gamma}(z)),f_{\gamma}(v_1(y,f_{\beta}(y),z,f_{\gamma}(z))))), \\
  u_1(y,f_{\beta}(y),z,f_{\gamma}(z)),f_{\beta}(u_1(y,f_{\beta}(y),z,f_{\gamma}(z)))).
  \ \ \ \ \ \ \ \ \ \ \ \ \ \ \ \ \ \ \ \
\end{eqnarray*}
If we set $x=x_1, \ y=y_1, \ z=z_1, \ f_{\alpha}(x_1)=x_2, \
f_{\beta}(y_1)=y_2, \ f_{\gamma}(z_1)=z_2$, then from
(\ref{u2f}),(\ref{u1f}) we have that
\\
$
 f_{\alpha}(u_1(x,f_{\alpha}(x),y,f_{\beta}(y)))=u_2(x_1,x_2,y_1,y_2)$,
 \
 $f_{\beta}(u_1(y,f_{\beta}(y),z,f_{\gamma}(z))))=u_2(y_1,y_2,z_1,z_2)$,
 \\
$f_{\gamma}(v_1(y,f_{\beta}(y),z,f_{\gamma}(z))))=v_2(y_1,y_2,z_1,z_2)$,
\\
 $ f_{\alpha}
 (u_1(x,f_{\alpha}(x),v_1(y,f_{\beta}(y),z,f_{\gamma}(z)),f_{\gamma}(v_1(y,f_{\beta}(y),z,f_{\gamma}(z)))))=$ \\$
  u_2(x_1,x_2,v_1(y_1,y_2,z_1,z_2),v_2(y_1,y_2,z_1,z_2))$,
 so $x_1'=\bar{x}_1$, $x_1''=\bar{\bar{x}}_1$ and since
 $\bar{x}_1=\bar{\bar{x}}_1$, we derive that $x_1'=x_1''$ or
 $x'=x''$. In a similar way we can show that $y'=y''$ and $z'=z''$.

Furthermore, since
\begin{eqnarray*}
 &&L(u_1(x_1,x_2,y_1,y_2),u_2(x_1,x_2,y_1,y_2),\alpha)L(v_1(x_1,x_2,y_1,y_2),v_2(x_1,x_2,y_1,y_2),\beta)=
 \\
 &&L(x_1,x_2,\alpha)L(y_1,y_2,\beta), \ \text{for every } x_i,  y_i \in
 X, \ i=1,2,
\end{eqnarray*}
then \\  {\small{${\small{
L(u_1(x,f_{\alpha}(x),y,f_{\beta}(y)),u_2(x,f_{\alpha}(x),y,f_{\beta}(y)),\alpha)
 L(v_1(x,f_{\alpha}(x),y,f_{\beta}(y)),v_2(x,f_{\alpha}(x),y,f_{\beta}(y)),\beta)}}
$}} \\
 $= L(x,f_{\alpha}(x),\alpha)L(y,f_{\beta}(y),\beta), \ \text{for
every } x,  y \in
 X$, and from (\ref{u2f}),(\ref{u1f}) we get that
\begin{eqnarray*}
&&L(u_1(x,f_{\alpha}(x),y,f_{\beta}(y)),f_{\alpha}(u_1(x,f_{\alpha}(x),y,f_{\beta}(y))))
\cdot
\\
&&
L(v_1(x,f_{\alpha}(x),y,f_{\beta}(y)),f_{\beta}(v_1(x,f_{\alpha}(x),y,f_{\beta}(y))))
 = L(x,f_{\alpha}(x),\alpha)L(y,f_{\beta}(y),\beta),
\end{eqnarray*}
which means that $\tilde{L}(x,\alpha)=L(x,f_{\alpha}(x),\alpha)$ is
a Lax matrix of the YB map $\tilde{R}_{\alpha,\beta}$.

\end{proof}

\end{document}